\pdfoutput=1
\RequirePackage{ifpdf}
\ifpdf 
\documentclass[pdftex]{sigma}
\else
\documentclass{sigma}
\fi

\numberwithin{equation}{section}

\newtheorem{Theorem}{Theorem}[section]
\newtheorem*{Theorem*}{Theorem}

\newtheorem{Lemma}[Theorem]{Lemma}
\newtheorem{Proposition}[Theorem]{Proposition}
 { \theoremstyle{definition}
\newtheorem{Definition}[Theorem]{Definition}

\newtheorem{Example}[Theorem]{Example}
\newtheorem{Remark}[Theorem]{Remark} }

\def\R{\mathbb{R}}
\def\e{\varepsilon}

\newcommand{\la}{\lambda}
\def\N{{\mathbb{N}}}

\def\a{{\alpha}}
\def\b{{\beta}}

\def\id{\mathrm{id}}
\allowdisplaybreaks

\begin{document}

\allowdisplaybreaks

\newcommand{\arXivNumber}{2504.21359}

\renewcommand{\PaperNumber}{084}

\FirstPageHeading

\ShortArticleName{Ultra-Discretization of Yang--Baxter Maps and Independence Preserving Property}

\ArticleName{Ultra-Discretization of Yang--Baxter Maps,\\ Probability Distributions and Independence\\ Preserving Property}

\Author{Hiroki KONDO~$^{\rm a}$, Sachiko NAKAJIMA~$^{\rm b}$ and Makiko SASADA~$^{\rm b}$}

\AuthorNameForHeading{H.~Kondo, S.~Nakajima and M.~Sasada}

\Address{$^{\rm a)}$~Faculty of Data Science, Shimonoseki City University, Yamaguchi 751-8510, Japan}
\EmailD{\href{mailto:kondo-hi@shimonoseki-cu.ac.jp}{kondo-hi@shimonoseki-cu.ac.jp}}

\Address{$^{\rm b)}$~Graduate School of Mathematical Sciences, The University of Tokyo, Tokyo 153-8914, Japan}
\EmailD{\href{mailto:sachiko.nakajima@steam21.com}{sachiko.nakajima@steam21.com}, \href{mailto:sasada@ms.u-tokyo.ac.jp}{sasada@ms.u-tokyo.ac.jp}}

\ArticleDates{Received May 02, 2025, in final form October 07, 2025; Published online October 13, 2025}

\Abstract{We study the relationship between Yang--Baxter maps and the independence preserving (IP) property, motivated by their role in integrable systems, from the perspective of ultra-discretization. Yang--Baxter maps satisfy the set-theoretic Yang--Baxter equation, while the IP property ensures independence of transformed random variables. The relationship between these two seemingly unrelated properties has recently started to be studied by Sasada and Uozumi (2024). Ultra-discretization is a concept primarily used in the context of integrable systems and is an area of active research, serving as a method for exploring the connections between different integrable systems. However, there are few studies on how the stationary distribution for integrable systems changes through ultra-discretization. In~this paper, we introduce the concept of ultra-discretization for probability distributions, and prove that the properties of being a Yang--Baxter map and having the IP property are both preserved under ultra-discretization. Applying this to quadrirational Yang--Baxter maps, we confirm that their ultra-discrete versions retain these properties, yielding new examples of piecewise linear maps having the IP property. We also explore implications of our results for stationary distributions of integrable systems and pose several open questions.}

\Keywords{Yang--Baxter maps; quadrirational maps; ultra-discretization; tropicalization; zero-temperature limit; independence preserving property}

\Classification{60E05; 62E10; 37K60; 16T25}

\section{Introduction}

\subsection{Background}\label{section1.1}

Recently, research has begun on the connection between Yang--Baxter maps and the independence preserving property~\cite{SU}, which originated from very different fields.

For a bijective function $F\colon \mathcal{X} \times \mathcal{X} \to \mathcal{X} \times \mathcal{X}$ where $\mathcal{X}$ is a set, $F$ is called a Yang--Baxter map if it satisfies the ``set-theoretical'' Yang--Baxter equation
\begin{equation}\label{eq:YB}
F_{12} \circ F_{13} \circ F_{23} = F_{23} \circ F_{13} \circ F_{12},
\end{equation}
where $F_{ij}$ acts on the $i$-th and $j$-th factors of the product $\mathcal{X} \times \mathcal{X} \times \mathcal{X}$.

On the other hand, when $\mathcal{X}$ is a measurable set and $F$ is a measurable bijection, $F$ is said to have the independence preserving property (IP property for short) if there is a quadruplet of (non-Dirac) probability distributions $\mu$, $\nu$, $\tilde{\mu}$, $\tilde{\nu}$ satisfying $F( \mu \times \nu) = \tilde{\mu} \times \tilde{\nu}$. In other words, there exist independent (non-constant) $\mathcal{X}$-valued random variables $X$, $Y$ such that $U$, $V$ are also independent with $(U,V):=F(X,Y)$.

The background of the study on this new connection lies in the equivalence between the existence of independent and identically distributed (i.i.d.) stationary distributions for $(1+1)$-dimensional lattice models, and the IP property of the map that governs the local evolution of these lattice models~\cite{CSirf}. In recent years, the existence of i.i.d.\ stationary distributions for various examples of lattice models that define {\it integrable systems} has been discovered, and it has been suggested that there may be a connection between the integrability and the IP property \cite{CSirf,CS, CSsurvey}. Based on this idea, the authors of~\cite{SU} demonstrated that a class of Yang--Baxter maps known as quadrirational maps possess the IP property.

Furthermore, the paper~\cite{SU} suggests that there are possible extensions of the study on the relationship between Yang--Baxter maps and the IP property, such as ultra-discrete versions and matrix-valued versions. In this paper, the relationship between Yang--Baxter maps and the IP property is examined from the perspective of ultra-discretization.

Ultra-discretization is a concept primarily used in the context of integrable systems, and similar ideas are referred to as tropicalization in algebraic geometry and the zero-temperature limit in statistical mechanics. It is an operation that transforms the algebra of addition and multiplication into the algebra of min (or max) and addition. Although there are slight differences in the choice of signs and other details, these three terms essentially represent the same operation. One of the simplest examples is that the ultra-discretization of the map $F(x,y)=\bigl(x+y,\frac{x}{y}\bigr)$ is~${F_{\star}(x,y)=(\min\{x,y\},x-y)}$.

The study of ultra-discretization in integrable systems has already been actively pursued. For instance, it is well-known that when the discrete KdV equation (which is a discretization of the KdV equation) is ultra-discretized, the box-ball system is obtained \cite{Inoue_2012, TH98}. Moreover, research on the relation between the ultra-discrete integrable systems and tropical geometry is also being conducted (see~\cite{Inoue_2012} and references therein). On the other hand, in statistical mechanics, it is widely known that the zero-temperature limit of polymer models gives rise to first/last passage percolation, and there is a substantial body of research on this topic (see~\cite{CSjsp} and references therein). Especially, research on \textit{stochastic} integrable models has been particularly active in recent years. However, the statistical approach of studying the stationary distribution of discrete \textit{classical} (i.e., deterministic) integrable systems has just begun in recent years, and studies on how the stationary distribution changes through ultra-discretization has still been scarcely conducted. This present paper lays the foundation for such studies in this direction.

\subsection{Results}

Our main result consists of general theorems on ultra-discretization and concrete results concerning the case of quadrirational maps.

First, as general theorems, we show that the properties of being a Yang--Baxter map and having the IP property are both preserved under ultra-discretization in an appropriate sense (Proposition~\ref{prop:YBZL} and Theorem~\ref{thm:ip}) . In particular, we introduce the concept of ultra-discretization for a class of probability distributions and formulate the ultra-discretization of the IP property. Based on the newly introduced definition, for example, the ultra-discretization of the gamma distribution is a shifted exponential distribution (Proposition~\ref{prop:gam}), and the ultra-discretization of the generalized inverse Gaussian distribution is a shifted and truncated exponential distribution (Theorem~\ref{thm:trop-dist}).

As a second result, by applying the previous general theorems, we confirmed that the ultra-discretization of quadrirational Yang--Baxter maps are also Yang--Baxter maps, and further, that they have the IP property (Theorems~\ref{thm:ZT quad maps} and~\ref{thm:ip-quad}). As a result, a large number of examples of piecewise linear functions that having the IP property have been obtained. Such examples were previously very few, such as $F_{\mathrm{Exp}}(x,y):=(\min\{x,y\},x-y)$, which satisfies the IP property with the exponential distributions and geometric distributions. By constructing a general theory, we~are able to systematically provide many such examples. Moreover, in special cases, the result implies that the continuous-valued version of the box-ball system, or its generalization, has an independent and identically distributed (i.i.d.) stationary distribution. The IP property for piecewise linear functions is also related to results concerning the stationary distribution of the zero-temperature limit of polymer models~\cite{CSjsp}.

\subsection{Structure of the rest of this paper}

In Section~\ref{section2}, we formulate the ultra-discretization of rational maps and show that the property of being (parametrized) Yang--Baxter maps is preserved under the ultra-discretization (Proposition~\ref{prop:YBZL}). Then we recall the definition of quadrirational maps and explicitly calculate their ultra-discretization. In Section~\ref{section3}, we formulate the ultra-discretization for a family of probability distributions and show that the independence preservation property is preserved under ultra-discretization in an appropriate sense (Theorem~\ref{thm:ip}). Furthermore, we explicitly compute the ultra-discretization for several specific probability distributions (Theorem~\ref{thm:trop-dist}). Finally, we show that the ultra-discretization of quadrirational maps, obtained in Section~\ref{section2}, has the IP property (Theorem~\ref{thm:ip-quad}). In Section~\ref{section4}, we discuss the relation of our result to various integrable systems and address several open problems. A particularly important open question is the direct relation between being a Yang--Baxter map (or a member of Yang--Baxter maps with parameters) and having the IP property. Combining the results of the paper~\cite{SU} and the current one suggests a~deeper relationship, but a direct mathematical connection has not yet been obtained.

\section{Ultra-discretization of rational maps}\label{section2}

\subsection{Formulation and abstract results}
For $\e\in\R_+\!=\!\{x\in\R\mid x>0\}$,
define $S_{\e}\colon\R\to\R_+$ by $S_{\e}(x)=\exp\bigl(-\e^{-1}x\bigr)$ and ${S_{\e}^n\colon \R^n\to\R_+^n}$~by%
\[
S_{\e}^n(x_1,\ldots,x_n)=(S_{\e}(x_1),\ldots,S_{\e}(x_n)).
\]
These are bijections with $S_{\e}^{-1}(x)=-\e\log x$ and
$(S_{\e}^n)^{-1}(x_1,\ldots,x_n)=\bigl(S_{\e}^{-1}(x_1),\ldots,S_{\e}^{-1}(x_n)\bigr)$ their inverses.
In the following, we simply write $S_{\e}$ instead of $S_{\e}^n$.

Let $\R_+(x_1,\ldots,x_n)$ denote the semi-ring of rational functions of
form \smash{$f=\frac{P}{Q}$}, where $P$ and $Q$ are
polynomials of $x_1,\ldots,x_n$ with positive real coefficients.

For $f\in\R_+(x_1,\ldots,x_n)$, denote by $f_{\star}$ the function on $\R^n$ which is obtained by replacing $(+,\times)$-algebra with $(\min,+)$-algebra, i.e.,
$f\mapsto f_{\star}$ is the semi-ring homomorphism
from $\R_+(x_1,\ldots,x_n)$ to the semi-ring of functions $\R^n\to\R$ endowed with operations $(\min,+)$ which satisfies
\[
c\mapsto0,\qquad x_i\mapsto x_i,\qquad i=1,\ldots,n,
\]
where $c\in\R_+$ is any positive constant function.

\begin{Example}
If we define $f\in\R_+(x,y)$ by $f(x,y)=\frac{ax^k+by^{\ell}}{cx^m+dy^{n}}$ where $a,b,c,d >0$ and ${k,\ell,m,n \in \mathbb N}$, we have
$f_{\star}(x,y)=\min\{kx,\ell y\} - \min \{mx, ny\}$.
\end{Example}

The function~$f_{\star}$ is the {\it ultra-discretization} of $f$ in the following sense.

\begin{Proposition}\label{prop:zero-temp limit}
For $f\in\R_+(x_1,\ldots,x_n)$,
$(S_{\e})^{-1}\circ f\circ S_{\e}\colon \R^n\to\R$ converges
uniformly on $\R_+^n$
to $f_{\star}$ as $\e\downarrow0$.
\end{Proposition}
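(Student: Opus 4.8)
The plan is to reduce the statement to the case of monomials and then to sums, using the semi-ring homomorphism property of $f\mapsto f_\star$. First I would observe that $S_\e^{-1}\circ f\circ S_\e$ is itself obtained from $f$ by a semi-ring-type substitution: writing $f=P/Q$ with $P,Q$ polynomials with positive coefficients, one has $S_\e^{-1}(f(S_\e(x)))=-\e\log P\bigl(\e^{-\e^{-1}x_1},\dots\bigr)+\e\log Q(\dots)$, so it suffices to treat a single polynomial $P$ with positive coefficients and show that $-\e\log P\bigl(\exp(-\e^{-1}x_1),\dots,\exp(-\e^{-1}x_n)\bigr)$ converges uniformly on $\R^n$ (hence in particular on $\R_+^n$) to $P_\star(x)=\min$ over the monomials of $P$ of the corresponding linear form; the general claim then follows by subtracting the two uniform limits for $P$ and $Q$.

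For the polynomial case, suppose $P(y)=\sum_{j=1}^{N}c_j y^{\,\alpha^{(j)}}$ with $c_j>0$ and multi-indices $\alpha^{(j)}\in\N^n$. Substituting $y_i=\exp(-\e^{-1}x_i)$ gives $P=\sum_j c_j\exp\bigl(-\e^{-1}\ell_j(x)\bigr)$ where $\ell_j(x)=\langle\alpha^{(j)},x\rangle$ is linear. Setting $m(x)=\min_j\ell_j(x)=P_\star(x)$, factor out the dominant exponential:
\[
-\e\log P = m(x)-\e\log\!\Bigl(\sum_{j=1}^N c_j\exp\bigl(-\e^{-1}(\ell_j(x)-m(x))\bigr)\Bigr).
\]
Each exponent $\ell_j(x)-m(x)$ is $\ge 0$, so each summand lies in $(0,c_j]$; thus the sum lies in $\bigl[\min_j c_j,\ \sum_j c_j\bigr]$ uniformly in $x$ (the lower bound because at least one exponent vanishes, the upper bound trivially). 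Hence $-\e\log(\cdot)$ of that sum is bounded in absolute value by $\e\bigl(|\log\min_j c_j|+|\log\sum_j c_j|\bigr)$, which tends to $0$ as $\e\downarrow 0$ uniformly in $x$. Therefore $-\e\log P\to m(x)=P_\star(x)$ uniformly on all of $\R^n$.

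Finally, applying this to both $P$ and $Q$ and using $f_\star=P_\star-Q_\star$ (which holds since $f\mapsto f_\star$ is a semi-ring homomorphism sending the quotient to the $(\min,+)$-difference) gives the uniform convergence of $S_\e^{-1}\circ f\circ S_\e=(-\e\log P)-(-\e\log Q)$ to $f_\star$ on $\R_+^n$, and indeed on $\R^n$. The only point requiring a little care — the "main obstacle," though it is mild — is making sure the constant bounds $\min_j c_j$ and $\sum_j c_j$ genuinely do not depend on $x$; this is exactly the content of the factoring step above, and it works precisely because after pulling out the minimal linear form every remaining exponent is nonnegative. One should also note that the uniformity is claimed only on $\R_+^n$ in the statement, which is harmless since our bound is uniform on the whole of $\R^n$.
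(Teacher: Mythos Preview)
Your proof is correct and follows essentially the same route as the paper: reduce to a single positive-coefficient polynomial, factor out the minimal linear form $m(x)=P_\star(x)$, and bound the remaining sum uniformly between $\min_j c_j$ and $\sum_j c_j$ to conclude uniform convergence. The only cosmetic difference is notation (you write the linear forms $\ell_j$ explicitly, while the paper uses $g_{j\star}$), and your observation that the bound actually holds on all of $\R^n$ rather than just $\R_+^n$ is a harmless strengthening.
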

\begin{proof}
Since we have $(S_{\e})^{-1}\circ f\circ S_{\e}
=(S_{\e})^{-1}\circ P\circ S_{\e}-(S_{\e})^{-1}\circ Q\circ S_{\e}$,
it is sufficient to show that
$(S_{\e})^{-1}\circ P\circ S_{\e}$ converges to $P_{\star}$ when
$P$ is a polynomial of $x_1,\ldots,x_n$ with positive coefficients.

Let \smash{$P=\sum_{j}g_j$}, where each $g_j$ is a monomial with coefficient $c_j>0$.
Then $P_\star=\min_j\{g_{j\star}\}$ and we can write
\begin{align*}
S^{-1}_\varepsilon \circ P \circ S_\varepsilon
&{}= -\varepsilon \log \biggl( \sum_j c_j \exp\bigl(-\varepsilon^{-1} g_{j\star}\bigr) \biggr)\\
&{}= P_\star - \varepsilon \log \biggl( c_{j_0} + \sum_{j\neq j_0} c_j \exp\bigl(-\varepsilon^{-1}(g_{j\star}-P_\star)\bigr) \biggr),
\end{align*}
where the index $j_0=j_0(x_1,\ldots,x_n)$ with $P_\star(x_1,\ldots,x_n)=g_{j_0\star}(x_1,\ldots,x_n)$.
Since $g_{j\star}-P_\star\geq 0$ and $c_j$ are positive, we obtain
\[
\min_j c_j \;\le\; c_{j_0}+\sum_{j\neq j_0} c_j \exp\bigl(-\varepsilon^{-1}(g_{j\star}-P_\star)\bigr) \le \sum_j c_j,
\]
which implies the uniform convergence on 
$\mathbb{R}_+^n$.
\end{proof}

We will only use the uniform convergence on compact subsets of $\R_+^n$.
Note that the ultra-discretization here is also called the tropicalization or the zero-temperature limit in other literature.

\begin{Remark}
Note that the map $f\mapsto f_{\star}$ is not injective even if we identify polynomials that differ only in their coefficients.
For example, if we define $f,g\in\R_+(x,y)$ by
$f(x,y)=x^2+xy+y^2$ and $g(x,y)=x^2+y^2$, we have
$f_{\star}(x,y)=\min\{2x,x+y,2y\}$ and
$g_{\star}(x,y)=\min\{2x,2y\}$, which are the same as
functions on $\R^2$.
\end{Remark}

\begin{Remark}
If we use $S_{-\e}$ instead of $S_{\e}$,
we will have the image in the $(\max,+)$ algebra as the limit
in Proposition~\ref{prop:zero-temp limit}.
\end{Remark}

Let $F\colon \mathcal{X}\times \mathcal{X}\to \mathcal{X}\times \mathcal{X}$ be a
Yang--Baxter map on a set $\mathcal{X}$, i.e., $F$ satisfies the Yang--Baxter equation (\ref{eq:YB}) in Section~\ref{section1.1}.
If $S\colon \mathcal{X}\to \mathcal{X}$ is a bijection and we denote
$S\colon \mathcal{X}\times \mathcal{X}\to \mathcal{X}\times \mathcal{X}$ by the map which acts
as $S$ on each component, then clearly
$S^{-1}\circ F\circ S$ is also a Yang--Baxter map.

Combining this with Proposition~\ref{prop:zero-temp limit},
we have the following.

\begin{Proposition}
If $f_1,f_2\in\R_+(x,y)$ and
$F=(f_1,f_2)\colon\R_+^2\to\R_+^2$ is a Yang--Baxter map,
then $F_{\star}=(f_{1\star},f_{2\star})$ is a Yang--Baxter map
on $\R^2$.
\end{Proposition}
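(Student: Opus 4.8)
The plan is to realize $F_{\star}$ as a limit of genuine Yang--Baxter maps and then pass the Yang--Baxter identity to the limit. For $\e>0$ let $S_{\e}$ act componentwise and put $G_{\e}:=(S_{\e})^{-1}\circ F\circ S_{\e}\colon\R^2\to\R^2$. Since $S_{\e}\colon\R\to\R_+$ is a bijection, $G_{\e}$ is a bijection of $\R^2$, and by the observation made just before the proposition (conjugating a Yang--Baxter map by the componentwise extension of a bijection is again a Yang--Baxter map; the argument works verbatim when the conjugating bijection goes between two different sets) $G_{\e}$ is a Yang--Baxter map on $\R$. Concretely, because $S_{\e}$ acts diagonally on coordinates one has $(S_{\e})^{-1}\circ F_{ij}\circ S_{\e}=(G_{\e})_{ij}$ as maps $\R^3\to\R^3$ for each $ij\in\{12,13,23\}$, so conjugating the Yang--Baxter equation for $F$ by $S_{\e}$ on $\R^3$ yields the Yang--Baxter equation for $G_{\e}$.

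Next I would let $\e\downarrow 0$. Writing $F=(f_1,f_2)$, we have $G_{\e}=\bigl((S_{\e})^{-1}\circ f_1\circ S_{\e},\,(S_{\e})^{-1}\circ f_2\circ S_{\e}\bigr)$, which by Proposition~\ref{prop:zero-temp limit} converges to $(f_{1\star},f_{2\star})=F_{\star}$ uniformly on $\R^2$ (the estimate in that proof is uniform over all of $\R^2$). Moreover each $G_{\e}$ and $F_{\star}$ are continuous on $\R^2$. I would then use the elementary fact that a composition of finitely many maps converging locally uniformly to continuous limits again converges locally uniformly: for $K$ compact the $G_{\e}$ eventually map $K$ into a fixed compact neighbourhood of $F_{\star}(K)$, and on that neighbourhood the error is controlled by $\sup|G_{\e}-F_{\star}|$ together with the uniform continuity of $F_{\star}$. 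Applying this to the triple compositions, both sides of the Yang--Baxter equation for $G_{\e}$ converge pointwise on $\R^3$ to the corresponding triple compositions of $F_{\star}$; hence $F_{\star}$ satisfies the set-theoretic Yang--Baxter equation on $\R^2$.

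It remains to verify that $F_{\star}$ is a bijection of $\R^2$. For this I would use that the inverse $F^{-1}$ is again a pair of rational functions with positive coefficients, $F^{-1}=(g_1,g_2)$ with $g_i\in\R_+(x,y)$ --- which holds for the Yang--Baxter maps relevant to this paper, in particular for the quadrirational maps of the next subsection. Then $(S_{\e})^{-1}\circ F^{-1}\circ S_{\e}\to(g_{1\star},g_{2\star})=:H$ locally uniformly by Proposition~\ref{prop:zero-temp limit}, and passing to the limit (via the same composition fact) in the identities $F^{-1}\circ F=\id=F\circ F^{-1}$ conjugated by $S_{\e}$ gives $H\circ F_{\star}=\id=F_{\star}\circ H$. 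Thus $F_{\star}$ is a bijection with $F_{\star}^{-1}=H$, and together with the previous paragraph this shows that $F_{\star}$ is a Yang--Baxter map on $\R^2$.

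The main obstacle is this last point. The Yang--Baxter equation is an equality of two fixed continuous maps, so it automatically survives any pointwise limit and, once the conjugation picture is in place, costs essentially nothing; but injectivity and surjectivity are not inherited by locally uniform limits of homeomorphisms, so one genuinely needs information about $F^{-1}$ --- not merely that $F$ is a bijection --- to conclude that $F_{\star}$ is a bijection. The only other point needing care is the routine verification that compositions pass to the limit, for which the uniform (or locally uniform) convergence furnished by Proposition~\ref{prop:zero-temp limit} is exactly what is required.
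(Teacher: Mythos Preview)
Your argument is exactly the paper's: conjugate by $S_\e$ and pass to the limit via Proposition~\ref{prop:zero-temp limit}; the paper's entire proof is the one sentence ``Combining this with Proposition~\ref{prop:zero-temp limit}, we have the following.'' Your concern about bijectivity is legitimate and is simply not addressed by the paper---in the parametrized Proposition~\ref{prop:YBZL} it speaks only of the ``Yang--Baxter property'' (the equation alone) and treats reversibility as a separate add-on---so you have in fact been more careful than the source, and your proposed remedy (assume the components of $F^{-1}$ also lie in $\R_+(x,y)$, which holds for all the maps used later) is the natural one.
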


Next, we also consider the ultra-discretization of Yang--Baxter maps with parameters.
Let ${F^{\alpha,\beta}\colon \mathcal{X}\times \mathcal{X}\to \mathcal{X}\times \mathcal{X}}$ be a family of maps with parameters $\alpha$, $\beta$.
We say that $\bigl(F^{\alpha,\beta}\bigr)$ satisfies the Yang--Baxter property if the Yang--Baxter equation
\[
F_{12}^{\alpha,\beta}\circ F_{13}^{\alpha,\gamma}\circ F_{23}^{\beta,\gamma}=F_{23}^{\beta,\gamma}\circ F_{13}^{\alpha,\gamma}\circ F_{12}^{\alpha,\beta}
\]
holds for any $(\alpha,\beta,\gamma)$.
If \smash{$\bigl(F^{\alpha,\beta}\bigr)$} also satisfies
\[
F_{21}^{\beta,\alpha}\circ F_{12}^{\alpha,\beta}=\id,
\]
\smash{$\bigl(F^{\alpha,\beta}\bigr)$} is called a {\it reversible} Yang--Baxter family.

We can prove the Yang--Baxter property for the ultra-discretization with parameters in the same way as the theorem above.

\begin{Proposition}\label{prop:YBZL}
If $f_1^{\alpha,\beta},f_2^{\alpha,\beta}\in\R_+(x,y,\alpha,\beta)$ and
the family of maps
\smash{$\bigl(F^{\alpha,\beta}\bigr)_{\alpha,\beta\in\R_+}$}, where
\smash{$F^{\alpha,\beta}=\bigl(f_1^{\alpha,\beta},f_2^{\alpha,\beta}\bigr)\colon\R_+^2\to\R_+^2$},
satisfies the Yang--Baxter property,
then \[
\bigl(F_{\star}^{\alpha,\beta}\bigr)_{\alpha,\beta\in\R}=\bigl(f_{1\star}^{\alpha,\beta}, f_{2\star}^{\alpha,\beta}\bigr)_{\alpha,\beta\in\R}
\]
 satisfies the Yang--Baxter property on $\R^2$. Moreover, if $\bigl(F^{\alpha,\beta}\bigr)$ is a reversible Yang--Baxter family, then \smash{$\bigl(F_{\star}^{\alpha,\beta}\bigr)$} is also a reversible Yang--Baxter family.
\end{Proposition}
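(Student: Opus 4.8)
The plan is to reduce the parametrized statement to the unparametrized case already handled above, by treating the parameters $\alpha$, $\beta$, $\gamma$ as additional variables on which the maps act trivially. Concretely, for a fixed tuple $(\alpha,\beta,\gamma)\in\R_+^3$, consider $\R_+^3\times\R_+^3 = \R_+^3\times\R_+^3$ with coordinates, say, $(u_1,u_2,u_3)$ for the $\mathcal X$-values and a passive record of the parameters attached to each site; then the composite maps $F_{12}^{\alpha,\beta}\circ F_{13}^{\alpha,\gamma}\circ F_{23}^{\beta,\gamma}$ and $F_{23}^{\beta,\gamma}\circ F_{13}^{\alpha,\gamma}\circ F_{12}^{\alpha,\beta}$ are, after substituting the fixed values of $\alpha$, $\beta$, $\gamma$, maps $\R_+^2\to\R_+^2$ built out of coordinatewise application of $S_\e$ and elements of $\R_+(x,y)$. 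The Yang--Baxter property is the equality of these two composites for every choice of $(\alpha,\beta,\gamma)$.

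First I would record the elementary conjugation fact: for any bijection $S\colon\R\to\R$ acting coordinatewise and any map $G$, $(S^{-1}\circ G_1\circ S)\circ(S^{-1}\circ G_2\circ S)=S^{-1}\circ(G_1\circ G_2)\circ S$, so conjugation by $S_\e$ intertwines composition. Hence $(S_\e)^{-1}\circ\bigl(F_{12}^{\alpha,\beta}\circ F_{13}^{\alpha,\gamma}\circ F_{23}^{\beta,\gamma}\bigr)\circ S_\e$ equals the composite of the three conjugated maps, and likewise for the right-hand side. By Proposition~\ref{prop:zero-temp limit}, each conjugated factor $(S_\e)^{-1}\circ F_{ij}^{\cdot,\cdot}\circ S_\e$ converges, uniformly on compact subsets of $\R_+^2$ (with the fixed parameter values plugged in), to the corresponding $F_{ij,\star}^{\cdot,\cdot}$; note that since $f_{k}^{\alpha,\beta}\in\R_+(x,y,\alpha,\beta)$, fixing positive values of $\alpha,\beta$ yields an element of $\R_+(x,y)$ and its ultra-discretization is exactly $f_{k\star}^{\alpha,\beta}$ with those parameter values substituted, which is how $F_\star^{\alpha,\beta}$ is defined.

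Next I would pass to the limit $\e\downarrow 0$ in the identity $(S_\e)^{-1}\circ(\text{LHS})\circ S_\e=(S_\e)^{-1}\circ(\text{RHS})\circ S_\e$. The one technical point to be careful about is composing the uniform-on-compacts convergences: a composite $G_1^\e\circ G_2^\e\circ G_3^\e$ converges to $G_1\circ G_2\circ G_3$ locally uniformly provided the limit maps $G_i$ are continuous and, at each stage, the images of a fixed compact set under $G_3^\e$ (resp.\ $G_2^\e\circ G_3^\e$) stay inside a fixed compact set for all small $\e$; this holds because the limits $f_{k\star}^{\cdot,\cdot}$ are piecewise linear hence continuous, and a locally uniformly convergent sequence of continuous maps sends a compact set into a bounded (hence relatively compact, after a small enlargement) set. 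Taking the limit of both sides gives $F_{12,\star}^{\alpha,\beta}\circ F_{13,\star}^{\alpha,\gamma}\circ F_{23,\star}^{\beta,\gamma}=F_{23,\star}^{\beta,\gamma}\circ F_{13,\star}^{\alpha,\gamma}\circ F_{12,\star}^{\alpha,\beta}$ as maps on $\R_+^2$; since both sides are piecewise linear and agree on the dense open set $\R_+^2\subset\R^2$, they agree on all of $\R^2$ by continuity. As $(\alpha,\beta,\gamma)$ was arbitrary in $\R_+^3$, and the same density/continuity argument extends the identity to all $(\alpha,\beta,\gamma)\in\R^3$, the family $\bigl(F_\star^{\alpha,\beta}\bigr)$ satisfies the Yang--Baxter property. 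The reversibility claim is identical but shorter: conjugate $F_{21}^{\beta,\alpha}\circ F_{12}^{\alpha,\beta}=\id$ by $S_\e$, let $\e\downarrow0$ using the same composition-of-limits argument, and conclude $F_{21,\star}^{\beta,\alpha}\circ F_{12,\star}^{\alpha,\beta}=\id$ on $\R_+^2$ and then on $\R^2$.

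The main obstacle is purely the bookkeeping around composing limits that are only uniform on compact sets rather than globally: one must check the images of compacta remain trapped in compacta. This is where it is convenient to actually exploit global uniform convergence on $\R_+^n$ from Proposition~\ref{prop:zero-temp limit} at the level of each $(S_\e)^{-1}\circ P\circ S_\e$ (the polynomial building blocks), though the difference of two such — and hence $F_{ij}$ itself — need not be globally uniformly convergent, so the compact-set argument is what I would write down. Everything else is the elementary algebra of conjugation and the density argument extending a piecewise-linear identity from $\R_+^2$ to $\R^2$.
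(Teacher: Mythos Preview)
Your argument contains a genuine gap at the step where you claim that ``fixing positive values of $\alpha,\beta$ yields an element of $\R_+(x,y)$ and its ultra-discretization is exactly $f_{k\star}^{\alpha,\beta}$ with those parameter values substituted.'' This is precisely the operation that the Remark immediately following the proposition warns against: ultra-discretization does \emph{not} commute with fixing parameters to positive constants, because the homomorphism $f\mapsto f_\star$ sends every positive constant to~$0$. A concrete counterexample is $f(x,y)=\alpha x+\beta y$. Treating $\alpha,\beta$ as variables in $\R_+(x,y,\alpha,\beta)$ gives $f_\star(x,y,\alpha,\beta)=\min\{\alpha+x,\beta+y\}$; but if you first fix $\alpha,\beta>0$ and regard $f$ as an element of $\R_+(x,y)$, its ultra-discretization is $\min\{x,y\}$, with the parameters erased. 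Consequently, the limit of $(S_\e)^{-1}\circ F^{\alpha,\beta}_{ij}\circ S_\e$ with $\alpha,\beta$ held fixed is \emph{not} $F^{\alpha,\beta}_{ij,\star}$, and your limiting identity is not the parametrized Yang--Baxter equation you want.

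The paper's route---and the fix to your argument---is to do what your first sentence announces but your execution abandons: regard $\alpha,\beta,\gamma$ genuinely as extra variables and apply $S_\e$ to them as well. Equivalently, use the standard device $\tilde F\bigl((x,\alpha),(y,\beta)\bigr)=\bigl((u,\alpha),(v,\beta)\bigr)$ to convert the parametrized family into an unparametrized Yang--Baxter map on $(\R_+\times\R_+)^2$, apply Proposition~\ref{prop:zero-temp limit} in the variables $(x,y,\alpha,\beta)$, and then reinterpret $\alpha,\beta$ as parameters in~$\R$ at the end. This is why the parameter domain changes from $\R_+$ to $\R$. A secondary issue: your density extensions (``$\R_+^2\subset\R^2$ dense'', and likewise for the parameters) are incorrect, since $\R_+^n$ is the open positive orthant, not a dense subset of $\R^n$; with the correct approach these extensions are unnecessary, as conjugating all variables by $S_\e$ already places everything on $\R^n$ from the outset.
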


\begin{Remark}
In the preceding proposition, the quantities $\alpha$, $\beta$ were temporarily treated as variables of $F$ and reinterpreted as parameters after ultra-discretization.
Accordingly, their domain in $F_{\star}$ has changed from $\mathbb{R}_{+}$ to $\mathbb{R}$.

Note that if we treat $f_1^{\alpha,\beta}$, $f_2^{\alpha,\beta}$ as functions in $\R_+(x,y)$ of fixed parameters $\a$, $\b$ and perform ultra-discretization, the information about the parameters can be lost, and we cannot obtain the ultra-discretized Yang--Baxter maps with parameters.
\end{Remark}

\begin{Remark}
The strategy to use the ultra-discretization to obtain piecewise linear Yang--Baxter maps was also studied in~\cite{KNW2, KNW} in the context of discrete integrable systems.
\end{Remark}

\subsection{Quadrirational maps}
In this subsection, we review the quadrirational maps
studied in~\cite{ABS,PSTV,SU},
and consider the ultra-discretization of them.

A map $F\colon\mathbb{CP}^1\times\mathbb{CP}^1\to\mathbb{CP}^1\times\mathbb{CP}^1; (x,y)\mapsto(u,v)$ is said to be
{\it quadrirational} if both $F$ and its companion map
$\overline{F}\colon\mathbb{CP}^1\times\mathbb{CP}^1\to\mathbb{CP}^1\times\mathbb{CP}^1; (x,v)\mapsto(u,y)$ are birational, where $\mathbb{CP}^1$ is the one-dimensional complex projective space.
In~\cite{ABS}, it is shown that
any quadrirational map $F(x,y)=\bigl(u(x,y),v(x,y)\bigr)$ has
the form
\[
u(x,y)=\frac{a(y)x+b(y)}{c(y)x+d(y)},\qquad
v(x,y)=\frac{A(x)y+B(x)}{C(x)y+D(x)},
\]
where $a$, $b$, $c$, $d$, $A$, $B$, $C$, $D$ are polynomials whose degrees are at most two.
The authors of~\cite{ABS} also showed that,
any quadrirational map
such that the highest degree of the coefficients in
the representations of both $u$ and $v$ is 2
is M\"{o}bius equivalent to any of the five families of maps
$F_{\mathrm{I}}=\bigl(F_{\mathrm{I}}^{\alpha,\beta}\bigr),
F_{\mathrm{II}}=\bigl(F_{\mathrm{II}}^{\alpha,\beta}\bigr), \ldots,
F_{\mathrm{V}}=\bigl(F_{\mathrm{V}}^{\alpha,\beta}\bigr)$, where $\alpha$ and $\beta$ are
complex parameters.

Although the representative maps $F_{\mathrm{I}}, \ldots, F_{\mathrm{V}}$ are reversible Yang--Baxter maps, not all quadrirational maps satisfy
the Yang--Baxter property because M\"{o}bius actions do not preserve the Yang--Baxter property.

In~\cite{PSTV}, it is shown that there are additional 5
families of quadrirational Yang--Baxter maps up to
{\it YB equivalence}, named $H_{\mathrm{I}}$, $H_{\mathrm{II}}$, $H_{\mathrm{III}}^A$,
$H_{\mathrm{III}}^B$ and $H_{\mathrm{V}}$, and all of them are reversible.
Here two families of quadrirational maps
$\bigl(F^{\alpha,\beta}\bigr)$ and \smash{$\bigl(\tilde{F}^{\alpha,\beta}\bigr)$} are
YB equivalent if there exists a family of bijection
$\phi(\alpha)$ in $\mathbb{CP}^1$ such that
$\tilde{F}^{\alpha,\beta}=
\bigl(\phi(\alpha)^{-1}\times\phi(\beta)^{-1}\bigr)\circ F^{\alpha,\beta}\circ(\phi(\alpha)\times\phi(\beta))$ holds.

It is also shown in~\cite{PSTV} that $H_{\mathrm{I}}$ and $H_{\mathrm{II}}$ have subtraction-free representatives named $H_{\mathrm{I}}^+$ and~$H_{\mathrm{II}}^+$ respectively,
and $H_{\mathrm{III}}^A$ and $H_{\mathrm{III}}^B$ are originally subtraction-free:
\begin{align*}
&H_{\mathrm{I}}^{+,\alpha,\beta}(x,y)=\biggl(\frac{y}{\alpha}\frac{\beta+\alpha x+\beta y+\alpha\beta xy}{1+x+y+\beta xy},\frac{x}{\beta}\frac{\alpha+\alpha x+\beta y+\alpha\beta xy}{1+x+y+\alpha xy}\biggr),\\
&H_{\mathrm{II}}^{+,\alpha,\beta}(x,y)=\biggl(\frac{y}{\alpha}\frac{\beta+\alpha x+\beta y}{1+x+y},\frac{x}{\beta}\frac{\alpha+\alpha x+\beta y}{1+x+y}\biggr),\\
&H_{\mathrm{III}}^{A,\alpha,\beta}(x,y)=\biggl(\frac{y}{\alpha}\frac{\alpha x+\beta y}{x+y},
\frac{x}{\beta}\frac{\alpha x+\beta y}{x+y}\biggr),\\
&H_{\mathrm{III}}^{B,\alpha,\beta}(x,y)=\biggl(y\frac{1+\beta xy}{1+\alpha xy},x\frac{1+\alpha xy}{1+\beta xy}\biggr).
\end{align*}

In what follows, we restrict the variables $x$, $y$ and
the parameters $\alpha$, $\beta$ of these maps to lie in~$\R_+$
for applying ultra-discretization.

Define bijections $I,\theta_{\alpha}\colon\R_+\to\R_+$ by $I(x)=\frac1x$ and
$\theta_{\alpha}(x)=\alpha x$ for $\alpha\in\R_+$.
As mentioned in~\cite{SU},
$H_{\mathrm{III}}^A$ and $H_{\mathrm{III}}^B$ have the following relationship:
\[
H_{\mathrm{III}}^{B,\alpha,\beta}=((I\circ\theta_{\alpha})\times\id)
\circ H_{\mathrm{III}}^{A,\alpha,\beta}\circ(\id\times(I\circ\theta_{\beta})),
\]
in particular they are M\"{o}bius equivalent.

Note that we can consider naturally the limit of \smash{$H_{\mathrm{III}}^{B,\alpha,\beta}$} when
$\alpha$ or $\beta$ tend to 0, while it is not possible for \smash{$H_{\mathrm{III}}^{A,\alpha,\beta}$}.
We will introduce Yang--Baxter families
with similar property
which are M\"{o}bius equivalent to \smash{$H_{\mathrm{I}}^{+,\alpha,\beta}$} and
\smash{$H_{\mathrm{II}}^{+,\alpha,\beta}$}.

We define \smash{$G_{\mathrm{I}}=\bigl(G_{\mathrm{I}}^{\alpha,\beta}\bigr)$} by
\[
G_{\mathrm{I}}^{\alpha,\beta}=((I\circ\theta_{\alpha})\times\id)
\circ H_{\mathrm{I}}^{+,\alpha,\beta}\circ(\id\times(I\circ\theta_{\beta})).
\]
The map \smash{\smash{$G_{\mathrm{I}}^{\alpha,\beta}$}} can be written explicitly as
\[
G_{\mathrm{I}}^{\alpha,\beta}(x,y)
=\biggl(y\frac{1+\beta x+\beta y+\beta xy}{1+\alpha x+\beta y+\alpha xy},x\frac{1+\alpha x+\alpha y+\alpha xy}{1+\alpha x+\beta y+\beta xy}\biggr).
\]
We also define \smash{$G_{\mathrm{II}}=\bigl(G_{\mathrm{II}}^{\alpha,\beta}\bigr)$} by
\[
G_{\mathrm{II}}^{\alpha,\beta}
=(\theta_{\alpha^{-1}}\times\theta_{\beta^{-1}})\circ
H_{\mathrm{II}}^{+,\alpha^{-1},\beta^{-1}}\circ(\theta_{\alpha}\times\theta_{\beta}).
\]
The map \smash{$G_{\mathrm{II}}^{\alpha,\beta}$} can be written explicitly as
\[
G_{\mathrm{II}}^{\alpha,\beta}(x,y)
=\biggl(y\frac{1+\beta x+\beta y}{1+\alpha x+\beta y},x\frac{1+\alpha x+\alpha y}{1+\alpha x+\beta y}\biggr).
\]

\begin{Proposition}
Maps $G_{\mathrm{I}}$ and $G_{\mathrm{II}}$ satisfy the Yang--Baxter property, and both are reversible.
\end{Proposition}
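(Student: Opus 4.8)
The plan is to deduce the statement from the fact, established in~\cite{PSTV}, that $H_{\mathrm{I}}^{+}$ and $H_{\mathrm{II}}^{+}$ are reversible Yang--Baxter families, by observing that $G_{\mathrm{I}}$ and $G_{\mathrm{II}}$ are built from these through operations that preserve the reversible Yang--Baxter property. I will use two such operations. First, relabeling the parameter by a bijection $\sigma\colon\R_{+}\to\R_{+}$ — that is, replacing $\bigl(F^{\alpha,\beta}\bigr)$ by $\bigl(F^{\sigma(\alpha),\sigma(\beta)}\bigr)$ — preserves the Yang--Baxter property, because $(\alpha,\beta,\gamma)\mapsto(\sigma(\alpha),\sigma(\beta),\sigma(\gamma))$ is a bijection of parameter triples, and it obviously preserves reversibility. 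Second, a YB equivalence $\tilde F^{\alpha,\beta}=\bigl(\phi(\alpha)^{-1}\times\phi(\beta)^{-1}\bigr)\circ F^{\alpha,\beta}\circ(\phi(\alpha)\times\phi(\beta))$ preserves both properties: on $\mathcal{X}^{3}$, writing $\Phi=\phi(\alpha)\times\phi(\beta)\times\phi(\gamma)$, one has $\tilde F_{12}^{\alpha,\beta}=\Phi^{-1}\circ F_{12}^{\alpha,\beta}\circ\Phi$ and likewise for the pairs $13$, $23$, so conjugating the Yang--Baxter equation by $\Phi$ transports it from $F$ to $\tilde F$; similarly $\tilde F_{21}^{\beta,\alpha}=\bigl(\phi(\alpha)^{-1}\times\phi(\beta)^{-1}\bigr)\circ F_{21}^{\beta,\alpha}\circ(\phi(\alpha)\times\phi(\beta))$, so conjugating the reversibility relation by $\phi(\alpha)\times\phi(\beta)$ transports it as well.

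For $G_{\mathrm{II}}$ this settles the matter at once. Put $\phi(\alpha)=\theta_{\alpha}$, so $\phi(\alpha)^{-1}=\theta_{\alpha^{-1}}$, and let $K^{\alpha,\beta}:=H_{\mathrm{II}}^{+,\alpha^{-1},\beta^{-1}}$, which is a reversible Yang--Baxter family by the first operation with $\sigma(\alpha)=\alpha^{-1}$. By the definition of $G_{\mathrm{II}}$ we have $G_{\mathrm{II}}^{\alpha,\beta}=\bigl(\phi(\alpha)^{-1}\times\phi(\beta)^{-1}\bigr)\circ K^{\alpha,\beta}\circ(\phi(\alpha)\times\phi(\beta))$, i.e.\ $G_{\mathrm{II}}$ is the YB equivalence of $K$ by $\phi$; hence $G_{\mathrm{II}}$ is a reversible Yang--Baxter family.

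For $G_{\mathrm{I}}$, set $\psi_{\alpha}:=I\circ\theta_{\alpha}$, so $\psi_{\alpha}(x)=1/(\alpha x)$ and each $\psi_{\alpha}$ is an involution; then $G_{\mathrm{I}}^{\alpha,\beta}=(\psi_{\alpha}\times\id)\circ H_{\mathrm{I}}^{+,\alpha,\beta}\circ(\id\times\psi_{\beta})$, the same ``one-sided'' twist that relates $H_{\mathrm{III}}^{B}$ to $H_{\mathrm{III}}^{A}$. Reversibility follows easily: writing $\tau(x,y)=(y,x)$ for the flip, one has $\tau\circ(\psi_{\alpha}\times\id)\circ\tau=\id\times\psi_{\alpha}$ and $\tau\circ(\id\times\psi_{\alpha})\circ\tau=\psi_{\alpha}\times\id$, hence
\[
\tau\circ G_{\mathrm{I}}^{\beta,\alpha}\circ\tau
=(\id\times\psi_{\beta})\circ\bigl(\tau\circ H_{\mathrm{I}}^{+,\beta,\alpha}\circ\tau\bigr)\circ(\psi_{\alpha}\times\id)
=(\id\times\psi_{\beta})\circ\bigl(H_{\mathrm{I}}^{+,\alpha,\beta}\bigr)^{-1}\circ(\psi_{\alpha}\times\id),
\]
the last equality being the reversibility of $H_{\mathrm{I}}^{+}$; composing this with $G_{\mathrm{I}}^{\alpha,\beta}=(\psi_{\alpha}\times\id)\circ H_{\mathrm{I}}^{+,\alpha,\beta}\circ(\id\times\psi_{\beta})$ and using $\psi_{\alpha}^{2}=\psi_{\beta}^{2}=\id$ yields $\tau\circ G_{\mathrm{I}}^{\beta,\alpha}\circ\tau\circ G_{\mathrm{I}}^{\alpha,\beta}=\id$, i.e.\ $G_{\mathrm{I}}$ is reversible.

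The Yang--Baxter property of $G_{\mathrm{I}}$ is the one point that does not seem to follow by a purely formal argument, and I expect it to be the main obstacle: the defining twist is not a YB equivalence (it applies $\psi_{\alpha}$ only after the first component and $\psi_{\beta}$ only before the second), and pushing the $\psi$'s through the Yang--Baxter equation on $\mathcal{X}^{3}$ leaves unpaired factors that cannot be absorbed using only that $H_{\mathrm{I}}^{+}$ is Yang--Baxter. I would therefore verify it by a direct computation from the explicit rational formula for $G_{\mathrm{I}}^{\alpha,\beta}$: substitute it into both sides of the Yang--Baxter equation~(\ref{eq:YB}) on $\R_{+}^{3}$ and check that the two resulting pairs of elements of $\R_{+}(x,y,z,\alpha,\beta,\gamma)$ coincide — a computation of the same nature as the one underlying the Yang--Baxter property of the $H$-families in~\cite{PSTV}, and one best organized (or double-checked) with a computer algebra system. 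Granting this, together with the reversibility established above, $G_{\mathrm{I}}$ is a reversible Yang--Baxter family, which completes the proof.
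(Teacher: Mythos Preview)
Your treatment of $G_{\mathrm{II}}$ matches the paper's exactly: it is a reparametrized YB equivalent of $H_{\mathrm{II}}^{+}$, hence a reversible Yang--Baxter family. Your direct reversibility computation for $G_{\mathrm{I}}$ is also correct.

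Where you diverge is on the Yang--Baxter property of $G_{\mathrm{I}}$, which you flag as the ``main obstacle'' and propose to handle by brute-force symbolic verification. In fact there \emph{is} a purely formal argument, and this is what the paper does. The missing ingredient is a symmetry of $H_{\mathrm{I}}^{+}$ that you did not use: with $\psi_{\alpha}=I\circ\theta_{\alpha}$, a short direct calculation gives
\[
(\psi_{\alpha}\times\psi_{\beta})\circ H_{\mathrm{I}}^{+,\alpha,\beta}\circ(\psi_{\alpha}\times\psi_{\beta})=H_{\mathrm{I}}^{+,\alpha,\beta}.
\]
Once this invariance is known, together with the reversibility of $H_{\mathrm{I}}^{+}$, the one-sided twist $G_{\mathrm{I}}^{\alpha,\beta}=(\psi_{\alpha}\times\id)\circ H_{\mathrm{I}}^{+,\alpha,\beta}\circ(\id\times\psi_{\beta})$ is again a reversible Yang--Baxter family by \cite[Proposition~2]{PSTV}, which the paper simply invokes. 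Intuitively, the symmetry lets you rewrite the twist equally well as $(\id\times\psi_{\beta})\circ H_{\mathrm{I}}^{+,\alpha,\beta}\circ(\psi_{\alpha}\times\id)$, and this double representation is exactly what absorbs the ``unpaired factors'' that blocked your formal attempt. The paper's subsequent remark confirms the mechanism: the analogous symmetry \emph{fails} for $H_{\mathrm{II}}^{+}$, which is why $G_{\mathrm{II}}$ had to be built via a genuine YB equivalence instead.

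Your brute-force plan would succeed, but the paper's route replaces it with a one-line symmetry check plus a citation, and moreover explains structurally why this particular one-sided twist works for $H_{\mathrm{I}}^{+}$ and not for $H_{\mathrm{II}}^{+}$.
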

\begin{proof}
The map $G_{\mathrm{II}}$ is a
(reparametrization of a) Yang--Baxter equivalent family of $H_{\mathrm{II}}^+$,
thus~$G_{\mathrm{II}}$ clearly satisfies the Yang--Baxter property.

For $G_{\mathrm{I}}$, note that $H_{\mathrm{I}}^+$ is a reversible Yang--Baxter family
\big(as well as $H_{\mathrm{III}}^A$\big).
We also have by direct calculation that
\[
((I\circ\theta_{\alpha})\times(I\circ\theta_{\beta}))\circ
H_{\mathrm{I}}^{+,\alpha,\beta}\circ
((I\circ\theta_{\alpha})\times(I\circ\theta_{\beta}))
=H_{\mathrm{I}}^{+,\alpha,\beta}.
\]
Now the result follows from~\cite[Proposition~2]{PSTV}.
The reversibility also follows by the same argument.
\end{proof}

\begin{Remark}
For $G_{\mathrm{II}}$, similar construction as $G_{\mathrm{I}}$ does not work,
namely, the family
$\bigl(((I\circ\theta_{\alpha})\times\id)
\circ H_{\mathrm{II}}^{+,\alpha,\beta}\circ(\id\times(I\circ\theta_{\beta}))
\bigr)$ does {\it not} satisfy the Yang--Baxter property.
In particular, the same argument as the proposition above cannot be
applied to $H_{\mathrm{II}}^+$ because
\[
((I\circ\theta_{\alpha})\times(I\circ\theta_{\beta}))\circ
H_{\mathrm{II}}^{+,\alpha,\beta}\circ
((I\circ\theta_{\alpha})\times(I\circ\theta_{\beta}))
\ne H_{\mathrm{II}}^{+,\alpha,\beta}.
\]
\end{Remark}

\begin{Remark}
The map \smash{$G_{\mathrm{II}}^{\alpha,\beta}$} is the same function as
$\psi_{\alpha,\beta}$ in~\cite{KW2}, where all the distributions satisfying the IP property for \smash{$G_{\mathrm{II}}^{\alpha,\beta}$} are characterized. Also, in~\cite{KK}, a family of non-commutative Yang--Baxter maps $\mathcal{K}_{a,b,-c}$ was introduced, and the map \smash{$G_{\mathrm{I}}^{\alpha,\beta}$}
is YB equivalent to the commutative
version of $\mathcal{K}_{1,1,-1}$. Similarly, \smash{$G_{\mathrm{II}}^{\alpha,\beta}$} is YB equivalent to the map $\mathcal{K}_{0,1,-1}$.
\end{Remark}

We can apply Proposition~\ref{prop:YBZL} to
the subtraction-free Yang--Baxter families
$H_{\mathrm{I}}^+$, $G_{\mathrm{I}}$,
$H_{\mathrm{II}}^+$, $G_{\mathrm{II}}$,~$H_{\mathrm{III}}^A$ and $H_{\mathrm{III}}^B$:

\begin{Theorem}\label{thm:ZT quad maps}
Let \smash{$H_{\mathrm{I},\star}^+$}, \smash{$G_{\mathrm{I},\star}$},
\smash{$H_{\mathrm{II},\star}^+$}, \smash{$G_{\mathrm{II},\star}$}, \smash{$H_{\mathrm{III},\star}^A$} and \smash{$H_{\mathrm{III},\star}^B$} be
the ultra-discretization of
\smash{$H_{\mathrm{I}}^+$}, \smash{$G_{\mathrm{I}}$},
\smash{$H_{\mathrm{II}}^+$}, \smash{$G_{\mathrm{II}}$}, \smash{$H_{\mathrm{III}}^A$} and \smash{$H_{\mathrm{III}}^B$} respectively.
Then these families satisfy the Yang--Baxter property on $\mathbb{R}^2$. Moreover, they are reversible.
\end{Theorem}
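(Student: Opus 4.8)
The plan is to verify, for each of the six families, the two hypotheses of Proposition~\ref{prop:YBZL}---that its two components lie in $\R_+(x,y,\alpha,\beta)$, and that it forms a reversible Yang--Baxter family on $\R_+^2$---and then to apply that proposition directly; there is no independent content to add.

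First I would check subtraction-freeness. For each of $H_{\mathrm{I}}^+$, $G_{\mathrm{I}}$, $H_{\mathrm{II}}^+$, $G_{\mathrm{II}}$, $H_{\mathrm{III}}^A$, $H_{\mathrm{III}}^B$, rewrite each of its two components as a single fraction $P/Q$ and observe that $P$ and $Q$ are products of polynomials in $x,y,\alpha,\beta$ with positive coefficients, hence are themselves such polynomials. For instance, the first component of $H_{\mathrm{I}}^{+,\alpha,\beta}$ equals $\frac{y(\beta+\alpha x+\beta y+\alpha\beta xy)}{\alpha(1+x+y+\beta xy)}$, whose numerator and denominator both expand with positive coefficients; the prefactors $y/\alpha$, $x/\beta$, $y$, $x$ cause no difficulty because $\alpha$, $\beta$ are treated as positive variables, exactly as in the formulation of Proposition~\ref{prop:YBZL}. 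This is a short mechanical check for all twelve components, and it also shows that each $F^{\alpha,\beta}$ restricts to a map $\R_+^2\to\R_+^2$ for $\alpha,\beta\in\R_+$, so that the ultra-discretization $F_\star^{\alpha,\beta}=(f_{1\star}^{\alpha,\beta},f_{2\star}^{\alpha,\beta})$ is defined.

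Next I would record that all six are reversible Yang--Baxter families: $H_{\mathrm{I}}^+$, $H_{\mathrm{II}}^+$, $H_{\mathrm{III}}^A$, $H_{\mathrm{III}}^B$ are the subtraction-free members of the quadrirational families of~\cite{PSTV}, which are reversible Yang--Baxter families there, while $G_{\mathrm{I}}$ and $G_{\mathrm{II}}$ were shown to be reversible Yang--Baxter families in the preceding proposition. The relevant Yang--Baxter equations (with a third parameter $\gamma$) and reversibility identities hold on all of $\mathbb{CP}^1\times\mathbb{CP}^1$, hence in particular for all positive values of the variables and parameters, i.e., on $\R_+^2$ with $\alpha,\beta,\gamma\in\R_+$. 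Applying Proposition~\ref{prop:YBZL} to each of the six families then gives that $H_{\mathrm{I},\star}^+$, $G_{\mathrm{I},\star}$, $H_{\mathrm{II},\star}^+$, $G_{\mathrm{II},\star}$, $H_{\mathrm{III},\star}^A$, $H_{\mathrm{III},\star}^B$ satisfy the Yang--Baxter property on $\R^2$, and the ``moreover'' clause of that proposition yields that each is reversible; if desired, the maps can be made explicit using the formula of the Example.

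I do not expect a genuine obstacle here: all the real work sits in Proposition~\ref{prop:YBZL}, which is already proved, and the only step needing care is the bookkeeping in the first paragraph---confirming that every one of the six maps is genuinely subtraction-free so that its ultra-discretization is well-defined.
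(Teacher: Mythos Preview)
Your proposal is correct and matches the paper's approach exactly: the paper does not give a separate proof environment for this theorem but simply states that one applies Proposition~\ref{prop:YBZL} to the six subtraction-free reversible Yang--Baxter families, which is precisely what you outline. The bookkeeping you describe (checking subtraction-freeness and citing~\cite{PSTV} and the preceding proposition for reversibility and the Yang--Baxter property) is the only content, and the paper treats it as already established in the surrounding discussion.
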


The maps in Theorem~\ref{thm:ZT quad maps} can be
written as follows:
\begin{gather*}
H_{\mathrm{I},\star}^{+,\alpha,\beta}(x,y)=(y-\alpha+\min\{\beta,\alpha+x,\beta+y,\alpha+\beta+x+y\}-\min\{0,x,y,\beta+x+y\},\\
\hphantom{H_{\mathrm{I},\star}^{+,\alpha,\beta}(x,y)=(}{}
 x-\beta+\min\{\alpha,\alpha+x,\beta+y,\alpha+\beta+x+y\}-\min\{0,x,y,\alpha+x+y\}),\\
G_{\mathrm{I},\star}^{\alpha,\beta}(x,y)=(y+\min\{0,\beta+x,\beta+y,\beta+x+y\}-\min\{0,\alpha+x,\beta+y,\alpha+x+y\},\\
\hphantom{G_{\mathrm{I},\star}^{\alpha,\beta}(x,y)=(}{}
 x+\min\{0,\alpha+x,\alpha+y,\alpha+x+y\}-\min\{0,\alpha+x,\beta+y,\beta+x+y\}),\\
H_{\mathrm{II},\star}^{+,\alpha,\beta}(x,y)=(y-\alpha+\min\{\beta,\alpha+x,\beta+y\}-\min\{0,x,y\},\\
\hphantom{H_{\mathrm{II},\star}^{+,\alpha,\beta}(x,y)=(}{}
 x-\beta+\min\{\alpha,\alpha+x,\beta+y\}
-\min\{0,x,y\}),\\
G_{\mathrm{II},\star}^{\alpha,\beta}(x,y)=(y+\min\{0,\beta+x,\beta+y\}
-\min\{0,\alpha+x,\beta+y\},\\
\hphantom{G_{\mathrm{II},\star}^{\alpha,\beta}(x,y)=(}{}
 x+\min\{0,\alpha+x,\alpha+y\}-\min\{0,\alpha+x,\beta+y\}),\\
H_{\mathrm{III},\star}^{A,\alpha,\beta}(x,y)=(y-\alpha+\min\{\alpha+x,\beta+y\}-\min\{x,y\},\\
\hphantom{H_{\mathrm{III},\star}^{A,\alpha,\beta}(x,y)=(}{}
 x-\beta+\min\{\alpha+x,\beta+y\}-\min\{x,y\}),\\
H_{\mathrm{III},\star}^{B,\alpha,\beta}(x,y)=(y+\min\{0,\beta+x+y\}-\min\{0,\alpha+x+y\},\\
\hphantom{H_{\mathrm{III},\star}^{B,\alpha,\beta}(x,y)=(}{}
 x+\min\{0,\alpha+x+y\}-\min\{0,\beta+x+y\}).
\end{gather*}
These functions are piecewise linear and the Yang--Baxter property of them could be proved directly, but
our framework provides an alternative method for verifying the Yang--Baxter property for these maps.

\begin{Remark}
For $G_{\mathrm{I},\star}$, $G_{\mathrm{II},\star}$ and $H_{\mathrm{III},\star}^{B}$, we can naturally consider the limit when $\a$ or $\b$ tend to $\infty$. Hence, similarly as summarized in the figure at~\cite[p.~4]{SU} for the subtraction-free rational maps, we can consider special cases with~$\a$ or $\b=\infty$, which are ultra-discretization of special cases of subtraction-free rational maps with~$\a$ or $\b=0$. This limiting procedure is also applicable for the IP property discussed in the latter sections.
\end{Remark}

\begin{Remark}
If we use $S_{-\varepsilon}$ instead of $S_{\varepsilon}$
(in other words, $\min$ is replaced by $\max$),
some different Yang--Baxter families are obtained.
However, it can be shown that
the Yang--Baxter families obtained in this way from
$H_{\mathrm{I}}^+$ and $G_{\mathrm{I}}$ are the same as $G_{\mathrm{I},\star}$ and $H_{\mathrm{I},\star}^+$, respectively.
\end{Remark}

\begin{Remark}\label{rem:bbs}
The map \smash{$H_{\mathrm{III}}^{B,\a,\b}$} is known to define the local dynamics of the modified discrete KdV equation \cite{PTV} and it is well known that up to the change of signs of variables, \smash{$H_{\mathrm{III},\star}^{B,\a,\b}$} define the ultra-discrete KdV equation \cite{KNW2, KNW}. In particular, when $\a,\b \in \N \cup \{\infty\}$ and we restrict the possible values of the variables so that
$x \in \{0,1,\dots,\a\}$ and $y \in \{0,1,\dots,\b\}$, then the dynamics associated with the map \smash{$H_{\mathrm{III},\star}^{B,\a,\b}$} (up to the signs again) corresponds to the box-ball system with box capacity $\a$ and carrier capacity $\b$.
\end{Remark}

\section[Ultra-discretization of probability distributions and the independence preserving property]{Ultra-discretization of probability distributions\\ and the independence preserving property}\label{section3}

\subsection{Formulation and abstract results}

In this subsection, we introduce an ultra-discretization of probability distributions, which is new as far as we know. Then we prove that the IP property is conserved under the ultra-discretization in a proper sense.

\begin{Definition}
 Let $(\mu_{\e})_{\e}$ be probability distributions on $\R_+$ indexed by $\e >0$. When $S_{\e}^{-1}(\mu_{\e})$, the pushforward measure of $\mu_{\e}$ by $S_{\e}^{-1}$, converges weakly to a probability measure $\mu$ on $\R$, then we say that $\mu$ is the {\it ultra-discretization} of $(\mu_{\e})_{\e}$. 
 \end{Definition}

Next, we prove a general result concerning the weak convergence of pushforward of weakly converging probability distributions by compactly converging functions.

\begin{Lemma}\label{lem:convergence}
Let $m, \ell \in \N$ and $D \subset \R^m$ be an open set. Let $(\mu_n)_{n \in \mathbb{N}}$ be a sequence of probability distributions on $D$ and $(f_n)_{n \in \N}$ be a sequence of continuous functions from $D$ to $\R^{\ell}$. Assume that there exist a probability distribution $\mu$ on $D$ and a continuous function $f\colon D \to \R^{\ell}$ such~that
\[
\mu_n \to \mu \ \text{weakly}, \qquad f_n \to f \ \text{uniformly on any compact set} \ K \subset D
\]
as $n$ goes to $\infty$. Then we have $\lim_{n \to \infty} f_n (\mu_n) = f (\mu)$ weakly.
\end{Lemma}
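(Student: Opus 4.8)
The plan is to prove weak convergence by testing against bounded continuous functions, i.e., to show that $\int_D g \,\mathrm{d}(f_n(\mu_n)) \to \int_D g\, \mathrm{d}(f(\mu))$ for every bounded continuous $g\colon \R^{\ell} \to \R$. Unwinding the pushforward, this means showing $\int_D (g\circ f_n)\,\mathrm{d}\mu_n \to \int_D (g\circ f)\,\mathrm{d}\mu$. The natural decomposition is
\[
\int_D (g\circ f_n)\,\mathrm{d}\mu_n - \int_D (g\circ f)\,\mathrm{d}\mu = \int_D (g\circ f_n - g\circ f)\,\mathrm{d}\mu_n + \left(\int_D (g\circ f)\,\mathrm{d}\mu_n - \int_D (g\circ f)\,\mathrm{d}\mu\right).
\]
The second term tends to $0$ directly from $\mu_n \to \mu$ weakly, since $g\circ f$ is bounded and continuous on $D$. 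So the whole issue is controlling the first term.

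The first term is where the main obstacle lies: $g\circ f_n - g\circ f$ converges to $0$ uniformly only on compact subsets of $D$, not on all of $D$, while the measures $\mu_n$ may a priori leak mass toward $\partial D$ or to infinity. The remedy is tightness: since $\mu_n \to \mu$ weakly on the (separable, metrizable) space $D$, Prokhorov's theorem gives that $\{\mu_n\}$ is tight, so for any $\eta > 0$ there is a compact set $K \subset D$ with $\mu_n(D\setminus K) < \eta$ for all $n$ (including the limit $\mu$). Then I would split $\int_D (g\circ f_n - g\circ f)\,\mathrm{d}\mu_n$ into the integral over $K$ and over $D\setminus K$. On $K$, uniform convergence of $f_n$ to $f$ plus uniform continuity of $g$ on a suitable compact neighborhood of $f(K)$ (or just continuity of $g$ and compactness of $f(K)$, giving a modulus-of-continuity bound) forces $\sup_{x\in K}|g(f_n(x)) - g(f(x))| \to 0$; over $D\setminus K$ we bound the integrand by $2\|g\|_\infty$ and the mass by $\eta$. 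Letting $n\to\infty$ then $\eta\to 0$ finishes it.

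Two technical points to handle carefully. First, to invoke uniform continuity of $g$ near $f(K)$: $f(K)$ is compact in $\R^\ell$, and by uniform convergence on $K$, for large $n$ the image $f_n(K)$ lies in a fixed bounded neighborhood $L$ of $f(K)$; $g$ is uniformly continuous on the closure $\bar L$, which is compact, so $|g(f_n(x)) - g(f(x))| \le \omega_g(|f_n(x)-f(x)|) \le \omega_g(\sup_K |f_n - f|) \to 0$, where $\omega_g$ is the modulus of continuity of $g$ on $\bar L$. Second, Prokhorov's direction I need is that a weakly convergent sequence on a Polish (or even just separable metric) space is tight — this is standard; $D \subset \R^m$ open is Polish, so it applies. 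With these in place the estimate is routine and the argument closes.

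I should also note the hypotheses are set up exactly so that Proposition~\ref{prop:zero-temp limit} supplies the ``uniform convergence on compact sets'' input (with $f_n = S_\e^{-1}\circ F \circ S_\e$ type maps, $D = \R_+^m$), and the weak convergence input is the definition of ultra-discretization of distributions; so Lemma~\ref{lem:convergence} is precisely the gluing tool needed later. The only genuinely substantive step is the tightness argument that upgrades compact-uniform convergence to the global statement; everything else is bookkeeping with the $\e$-$\eta$ split.
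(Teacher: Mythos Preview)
Your proof is correct and follows essentially the same approach as the paper's: the same decomposition into $\int (g\circ f_n - g\circ f)\,\mathrm{d}\mu_n$ plus $\int (g\circ f)\,\mathrm{d}\mu_n - \int (g\circ f)\,\mathrm{d}\mu$, with the second term handled by weak convergence and the first by tightness plus uniform convergence of $g\circ f_n$ to $g\circ f$ on compacts. The only cosmetic difference is that the paper proves this last uniform convergence by a subsequence contradiction argument, whereas you use a modulus-of-continuity bound for $g$ on a compact neighborhood of $f(K)$; both are standard and equivalent.
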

\begin{proof}
We prove that for any bounded continuous function $g\colon \R^{\ell} \to \R$,
\[
\lim_{n \to \infty} E[g(f_n(X_n))]=E[g(f(X))]
\]
if $X_n \sim \mu_n$ and $X \sim \mu$. Since $\mu_n \to \mu$ weakly, we have
\[
\lim_{n \to \infty} E[g(f(X_n))]=E[g(f(X))].
\]
Hence, we only need to prove that
\begin{equation}\label{eq:conv}
\lim_{n \to \infty} | E[g(f(X_n))]- E[g(f_n(X_n))] | = 0.
\end{equation}
Now, we prove that $g\circ f_n \to g \circ f$ uniformly on any compact set $K \subset D$. Fix a compact set $K \subset D$ and suppose that $g\circ f_n |_K$ does not converge uniformly to $g \circ f |_K$. Then there exist $\delta>0$, a sequence $(x_k)_{k \in \N} \subset K$ and an increasing sequence $n_k \to \infty \ (k \to \infty)$ such that
\[
|g(f_{n_k}(x_k))-g(f(x_k))| > \delta
\]
for any $k \in \N$. Since $K$ is compact, we can even assume that $x_k$ converges to some $x_{\infty} \in K$ by passing to a subsequence if necessary. Then, since $f_n$ converges to $f$ uniformly on $K$, $ \lim_{k \to \infty} f_{n_k}(x_k) = f(x_{\infty})$ and so
\[
\lim_{k \to \infty} (g(f_{n_k}(x_k))-g(f(x_k)) )= g(f(x_{\infty}))-g(f(x_{\infty}))=0
\]
gives the contradiction.

Having this uniform convergence $g\circ f_n \to g \circ f$ on compact sets, it is straightforward to prove~\eqref{eq:conv}. In fact, assuming $\|g\|_{\infty} \neq 0$, for any $\delta>0$, since $\mu_n \to \mu$ weakly, there exists a~compact set $K_{\delta} \subset D$ such that \smash{$\sup_n P(X_n \notin K_{\delta}) < \frac{\delta}{4\|g\|_{\infty}}$}. Also, there exists $n_{\delta}$ such that for any $n \ge n_{\delta}$ and $x \in K_{\delta}$, \smash{$|g(f_n(x))-g(f(x))| < \frac{\delta}{2}$}. Hence,
\begin{align*}
 | E[g(f_n(X_n))]- E[g(f(X_n))] | &\le E[|g(f_n(X_n))-g(f(X_n))|] \\
 & \le E\bigl[2\|g\|_{\infty}\mathbf{1}_{X_n \notin K_{\delta}}\bigr] + E\bigl[|g(f_n(X_n))-g(f(X_n))|\mathbf{1}_{X_n \in K_{\delta}}\bigr] \\
 & < \frac{\delta}{2} + \frac{\delta}{2} =\delta.
\end{align*}
Hence, the lemma follows.
\end{proof}

\begin{Remark}
 The above lemma is essentially the same as~\cite[Lemma 4.3]{SU}. In this paper, we~only need to consider the case where $\ell = m = 2$ and $D = \mathbb{R}^2$, while in~\cite{SU}, the case $D = \mathbb{R}_+^2$ is considered, with the codomains of the functions $f$ and $f_n$ also being $\mathbb{R}_+^2$. Although the proof is essentially the same, we provide it here as well for the sake of completeness.
\end{Remark}

The next theorem is the main result of this subsection, which states that the IP property is inherited to the ultra-discretization if such limits exist both for functions and probability distributions.

\begin{Theorem}\label{thm:ip}
 For a family of continuous functions $F_{\e}\colon \R_+^2 \to \R_+^2$ and probability distributions $\mu_{\e}$, $\nu_{\e}$, $\tilde{\mu}_{\e}$, $\tilde{\nu}_{\e}$ on $\R_+$ indexed by $\e>0$, suppose that $F_{\e}(\mu_{\e} \times \nu_{\e})=\tilde{\mu}_{\e} \times \tilde{\nu}_{\e}$ holds for any sufficiently small $\e >0$. If \smash{$S_{\e}^{-1} \circ F_{\e} \circ S_{\e}$} converges to $F \colon \R^2 \to \R^2$ uniformly on any compact set and $\mu$, $\nu$, $\tilde{\mu}$, $\tilde{\nu}$ are
 the ultra-discretization of
 $(\mu_{\e})_{\e}$, $(\nu_{\e})_{\e}$, $(\tilde{\mu}_{\e})_{\e}$, $(\tilde{\nu}_{\e})_{\e}$, respectively, then ${F(\mu \times \nu)=\tilde{\mu} \times \tilde{\nu}}$.
\end{Theorem}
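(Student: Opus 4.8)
The plan is to transfer the entire statement to the $(\min,+)$ side via the conjugation by $S_\e^{-1}$, and then to invoke Lemma~\ref{lem:convergence}. Since the hypotheses and conclusion of Lemma~\ref{lem:convergence} are phrased for sequences, I would first fix an arbitrary sequence $\e_n \downarrow 0$ and prove the identity along it; weak convergence along every such sequence is equivalent to the $\e \downarrow 0$ statement.

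First I would record two elementary facts. Because $S_\e$ acts coordinatewise, it intertwines products of measures: $S_\e^{-1}(\rho \times \sigma) = S_\e^{-1}(\rho) \times S_\e^{-1}(\sigma)$ for probability measures $\rho$, $\sigma$ on $\R_+$. Also, weak convergence is stable under taking products, i.e.\ if $\rho_n \to \rho$ and $\sigma_n \to \sigma$ weakly on $\R$, then $\rho_n \times \sigma_n \to \rho \times \sigma$ weakly on $\R^2$ (this is standard; it follows, e.g., from the multiplicativity of characteristic functions). Hence, by the definition of ultra-discretization, the product measures $\Lambda_n := S_{\e_n}^{-1}(\mu_{\e_n}) \times S_{\e_n}^{-1}(\nu_{\e_n})$ converge weakly on $\R^2$ to $\mu \times \nu$, and likewise $S_{\e_n}^{-1}(\tilde{\mu}_{\e_n}) \times S_{\e_n}^{-1}(\tilde{\nu}_{\e_n}) \to \tilde{\mu} \times \tilde{\nu}$ weakly.

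Next I would set $f_n := S_{\e_n}^{-1} \circ F_{\e_n} \circ S_{\e_n} \colon \R^2 \to \R^2$. These are continuous (as $S_{\e_n}$ is a homeomorphism and $F_{\e_n}$ is continuous) and, by assumption, converge uniformly on compact sets to $F$, which is therefore continuous as well. The key computation is that, using $S_\e^{-1}(\rho\times\sigma) = S_\e^{-1}(\rho)\times S_\e^{-1}(\sigma)$ together with the hypothesis $F_{\e_n}(\mu_{\e_n} \times \nu_{\e_n}) = \tilde{\mu}_{\e_n} \times \tilde{\nu}_{\e_n}$,
\[
f_n(\Lambda_n) = S_{\e_n}^{-1}\bigl(F_{\e_n}(\mu_{\e_n} \times \nu_{\e_n})\bigr) = S_{\e_n}^{-1}\bigl(\tilde{\mu}_{\e_n} \times \tilde{\nu}_{\e_n}\bigr) = S_{\e_n}^{-1}(\tilde{\mu}_{\e_n}) \times S_{\e_n}^{-1}(\tilde{\nu}_{\e_n})
\]
for all $n$ large enough. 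Now I apply Lemma~\ref{lem:convergence} with $m = \ell = 2$, $D = \R^2$, the measures $\Lambda_n \to \mu \times \nu$, and the functions $f_n \to F$: it yields $f_n(\Lambda_n) \to F(\mu \times \nu)$ weakly. Comparing with the second convergence established above and using uniqueness of weak limits gives $F(\mu\times\nu) = \tilde{\mu}\times\tilde{\nu}$, as desired.

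The argument is essentially bookkeeping once Lemma~\ref{lem:convergence} is in hand, so I do not expect a genuine obstacle; the only point requiring a small amount of care is combining two a priori unrelated convergences — of the conjugated functions $S_\e^{-1}\circ F_\e\circ S_\e$ and of the pushed-forward distributions $S_\e^{-1}(\mu_\e)$ etc. — into a single statement about the pushforwards, which is precisely the content of Lemma~\ref{lem:convergence}, combined with the standard stability of weak convergence under products.
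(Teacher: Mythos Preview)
Your proof is correct and follows essentially the same route as the paper's: conjugate by $S_\e^{-1}$, use that $S_\e^{-1}$ intertwines products to rewrite the IP identity on the $(\min,+)$ side, and then apply Lemma~\ref{lem:convergence} together with uniqueness of weak limits. You are in fact slightly more explicit than the paper in two places (passing to sequences $\e_n\downarrow 0$ to match the sequential formulation of Lemma~\ref{lem:convergence}, and justifying that weak convergence is preserved under products), but these are exactly the minor points the paper leaves implicit.
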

\begin{proof}
 Let $\bar{F}_{\e}:= S_{\e}^{-1} \circ F_{\e} \circ S_{\e}$. By assumption, \smash{$\bar{F}_{\e} \bigl(S_{\e}^{-1}(\mu_{\e}) \times S_{\e}^{-1}(\nu_{\e})\bigr)= S_{\e}^{-1}(\tilde{\mu}_{\e}) \times S_{\e}^{-1}(\tilde{\nu}_{\e})$}. Since $\mu$, $\nu$, $\tilde{\mu}$, $\tilde{\nu}$ are the ultra-discretization of $\mu_{\e}$, $\nu_{\e}$, $\tilde{\mu}_{\e}$, $\tilde{\nu}_{\e}$, respectively, \smash{$S_{\e}^{-1}(\mu_{\e}) \times S_{\e}^{-1}(\nu_{\e})$} converges weakly to $\mu \times \nu$ and \smash{$S_{\e}^{-1}(\tilde{\mu}_{\e}) \times S_{\e}^{-1}(\tilde{\nu}_{\e})$} converges weakly to $\tilde{\mu} \times \tilde{\nu}$. Then, by Lemma~\ref{lem:convergence}, \smash{$\bar{F}_{\e} \bigl(S_{\e}^{-1}(\mu_{\e}) \times S_{\e}^{-1}(\nu_{\e})\bigr)$} converges weakly to $F(\mu \times \nu)$, and it is also the weak limit of $S_{\e}^{-1}(\tilde{\mu}_{\e}) \times S_{\e}^{-1}(\tilde{\nu}_{\e})$, hence we conclude $F(\mu \times \nu)=\tilde{\mu} \times \tilde{\nu}$.
\end{proof}

\subsection{Examples of probability distributions}
In this subsection, we give concrete examples of
ultra-discretization of probability distributions.

First, as a simple example for the ultra-discretization of probability distributions, we show that the shifted exponential distribution is the ultra-discretization of the sequence of gamma distributions. To state this result, we introduce gamma distributions and shifted exponential distributions.

{\bf Gamma distribution.} For $\lambda, a>0$, the \emph{gamma} distribution with parameters $(\lambda,a)$, which we denote $\mathrm{Ga}(\lambda,a)$, has density
\[\frac{1}{Z} x^{\lambda-1} {\rm e}^{-ax}\mathbf{1}_{x >0}, \]
where $Z$ is a normalizing constant.

{\bf Shifted exponential distribution.} For $\lambda>0$, $a \in \R$, the \emph{shifted exponential} distribution with parameters $(\lambda,a)$, which we denote $\mathrm{sExp} (\lambda,a)$, has density
\[\frac{1}{Z} {\rm e}^{-\la x} \mathbf{1}_{x > a}, \]
where $Z$ is a normalizing constant.

\begin{Proposition}\label{prop:gam}
 Let $\mu_{\e}$ be the probability distribution $\mathrm{Ga}(\la \e,S_{\e}(a))$ for $\la >0,a \in \R$. Then $\mathrm{sExp}(\la, -a)$ is the ultra-discretization of $(\mu_{\e})_{\e}$.
\end{Proposition}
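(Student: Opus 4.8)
The plan is to verify the definition directly: I need to show that $S_\e^{-1}(\mu_\e)$ converges weakly to $\mathrm{sExp}(\la,-a)$ as $\e \downarrow 0$. Since $S_\e^{-1}(x) = -\e \log x$ is strictly decreasing, the cleanest route is to compute the density of the pushforward measure explicitly via the change of variables $x = S_\e(t) = \exp(-\e^{-1}t)$, so that $t = S_\e^{-1}(x) = -\e\log x$, and then show pointwise convergence of densities together with a domination argument (or directly compute cumulative distribution functions). First I would write down the density of $\mu_\e = \mathrm{Ga}(\la\e, S_\e(a))$, namely $\frac{1}{Z_\e} x^{\la\e - 1} e^{-S_\e(a)\,x}\mathbf 1_{x>0}$, and push it forward: with $x = e^{-t/\e}$ and $|dx/dt| = \e^{-1}e^{-t/\e}$, the density of $S_\e^{-1}(\mu_\e)$ at $t \in \R$ becomes proportional to $\bigl(e^{-t/\e}\bigr)^{\la\e-1}\cdot e^{-e^{-a/\e}e^{-t/\e}}\cdot \e^{-1}e^{-t/\e} = \e^{-1}e^{-\la t}\exp\bigl(-e^{-(t+a)/\e}\bigr)$, so up to the normalizing constant the density is $c_\e \, e^{-\la t}\exp\bigl(-e^{-(t+a)/\e}\bigr)$.

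Next I would analyze the limit of the factor $\exp\bigl(-e^{-(t+a)/\e}\bigr)$ as $\e \downarrow 0$: if $t + a > 0$, the exponent tends to $0$, so the factor tends to $1$; if $t + a < 0$, the exponent tends to $-\infty$, so the factor tends to $0$; at $t + a = 0$ it equals $e^{-1}$. Hence the unnormalized densities converge pointwise (for $t \ne -a$) to $e^{-\la t}\mathbf 1_{t > -a}$, which is exactly the unnormalized density of $\mathrm{sExp}(\la,-a)$. To upgrade this to weak convergence I would use Scheffé's lemma or a direct integral estimate: since $0 \le \exp\bigl(-e^{-(t+a)/\e}\bigr) \le 1$, the unnormalized densities are dominated by $e^{-\la t}\mathbf 1_{t > -a - 1}$ plus a controllable tail on $(-\infty,-a]$ that one checks vanishes, so by dominated convergence the normalizing constants converge to $Z = \int_{-a}^\infty e^{-\la t}\,dt$ and the normalized densities converge in $L^1(\R)$, which implies weak convergence (indeed convergence in total variation).

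An alternative, perhaps slicker, approach avoids densities entirely: write $\mu_\e$ via a scaling identity. If $W_\e \sim \mathrm{Ga}(\la\e, 1)$ then $W_\e / S_\e(a) \sim \mathrm{Ga}(\la\e, S_\e(a))$, so $S_\e^{-1}(\mu_\e)$ is the law of $-\e\log\bigl(W_\e/S_\e(a)\bigr) = -\e\log W_\e + \e\log S_\e(a) = -\e\log W_\e - a$; thus it suffices to show $-\e\log W_\e$ converges weakly to $\mathrm{sExp}(\la,0)$, the exponential distribution of rate $\la$ on $(0,\infty)$, and then translate by $-a$. For this one can compute the Laplace transform or CDF of $-\e\log W_\e$: $P(-\e\log W_\e > s) = P(W_\e < e^{-s/\e})$, and using that $\mathrm{Ga}(\la\e,1)$ concentrates near $0$ with $P(W_\e < x) \sim x^{\la\e}/\Gamma(\la\e+1) \approx \la\e\, x^{\la\e}$ for small $x$, one gets $P(W_\e < e^{-s/\e}) \approx \la\e\, e^{-\la s}$ for $s > 0$ — which after accounting for normalization tends to $e^{-\la s}$, giving the exponential law.

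The main obstacle I anticipate is not conceptual but bookkeeping: handling the normalizing constant $Z_\e$ carefully, since $\mathrm{Ga}(\la\e,\cdot)$ degenerates (the shape parameter $\la\e \to 0$, so $\Gamma(\la\e) \to \infty$), so one must confirm that the $\e$-dependent constants on both sides match up in the limit rather than merely checking pointwise convergence of unnormalized densities. Using Scheffé's lemma circumvents this cleanly, since it only requires pointwise a.e.\ convergence of the normalized densities to a probability density, which is automatic once the unnormalized densities converge pointwise to an integrable limit with the correct total mass behavior; so I would structure the proof around that.
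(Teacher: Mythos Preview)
Your proposal is correct and follows essentially the same route as the paper: compute the pushforward density via the change of variables $x = S_\e(t)$, obtain the expression proportional to $e^{-\la t}\exp\bigl(-e^{-(t+a)/\e}\bigr)$, analyze the pointwise limit of the factor $\exp\bigl(-e^{-(t+a)/\e}\bigr)$, and conclude by Scheff\'e's lemma. The only place where the paper is crisper is the handling of the normalizing constant: rather than bounding the tail on $(-\infty,-a]$ and invoking dominated convergence, the paper writes the \emph{normalized} density directly as $\dfrac{e^{-\la a}}{\Gamma(\la\e)\,\e}\,e^{-\la t}\exp\bigl(-e^{-(t+a)/\e}\bigr)$ using the known constant $Z = \Gamma(\la\e)/S_\e(a)^{\la\e}$, and then simply uses $\lim_{\e\to 0}\Gamma(\la\e)\,\e = 1/\la$ to see that the full normalized density converges pointwise a.e.\ to the $\mathrm{sExp}(\la,-a)$ density --- after which Scheff\'e applies with no further estimates needed.
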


{\samepage

\begin{proof}
 Let $\mu_{\e}:=\mathrm{Ga}(\la \e,S_{\e}(a))$. Then, since the normalizing constant of $\mathrm{Ga}(\lambda,a)$ is \smash{$\frac{\Gamma(\la)}{a^{\la}}$}, the density of $S_{\e}^{-1}(\mu_{\e})$ is
 \[
\frac{{\rm e}^{-\la a}}{\Gamma(\la \e)\e}\exp(-\la x)\exp \biggl(-\exp\biggl(-\frac{x+a}{\e}\biggr)\biggr).
 \]
 Since \smash{$ \lim_{\e \to 0} \Gamma(\la \e) \e =\frac{1}{\la}$} and \smash{$ \lim_{\e \to 0}\exp \bigl(-\exp\bigl(-\frac{x+a}{\e}\bigr)\bigr)=\mathbf{1}_{x > -a} + {\rm e}^{-1} \mathbf{1}_{x=-a}$}, as $\e \to 0$, the above density function converges almost everywhere to
 \[
 \la {\rm e}^{-\la a}\exp(-\la x)\mathbf{1}_{x > -a},
 \]
 which is the density function of $\mathrm{sExp}(\la, -a)$. By Scheff\'e's lemma, we conclude the proof.
\end{proof}

We symbolically denote this situation as $\mathrm{Ga}_{\star}=\mathrm{sExp}$.}

Next, we consider three classes of probability distributions on $\R_+$ with two positive parameters $p, q >0$, which are introduced in~\cite{SU}. In that paper, unnecessary extra parameters have been introduced for the Kummer distribution of Type 2 and the generalized inverse Gaussian distribution, but they were deliberately used to provide a unified notation for how the three distributions relate concerning scale transformation and weak convergence. In the present paper, we keep the same parametrization as the use of such a parametrization makes the statement of main theorem unified and is also the key to understanding ultra-discretization of them.

{\bf Generalized beta prime distribution $\boldsymbol{(p, q)}$.} For $\lambda, a,b \in \R$, $-b < \frac{\lambda}{2} < a$, the \emph{generalized beta prime} distribution with parameters $(\lambda,a,b ; p, q)$, which we denote $\mathrm{gBe}'(\lambda,a,b ; p, q)$, has density
\[\frac{1}{Z} x^{\lambda-1} (1+p x)^{-a-\frac{\lambda}{2}} \bigl(1+q x^{-1}\bigr)^{-b+\frac{\lambda}{2}}\mathbf{1}_{x >0},\]
where $Z$ is a normalizing constant.

{\bf Kummer distribution of Type 2 $\boldsymbol{(p, q)}$.} For $\lambda,b \in \R$, $a >0$, $-b < \frac{\lambda}{2}$, the \emph{Kummer distribution of Type $2$} with parameters $(\lambda,a,b ; p, q)$, which we denote $\mathrm{K} (\lambda,a,b ; p, q)$, has density
\[\frac{1}{Z} x^{\lambda-1}{\rm e}^{-a p x} \bigl(1+q x^{-1}\bigr)^{-b+\frac{\lambda}{2}}\mathbf{1}_{x >0},\]
where $Z$ is a normalizing constant.

{\bf Generalized inverse Gaussian distribution $\boldsymbol{(p, q)}$.} For $\lambda \in \R$, $a,b >0$, the \emph{generalized inverse Gaussian} distribution with parameters $(\lambda,a,b ; p, q)$, which we denote $\mathrm{GIG} (\lambda,a,b ; p, q)$, has density
\[\frac{1}{Z}x^{\lambda-1}{\rm e}^{-a p x-bq x^{-1}}\mathbf{1}_{x >0},\]
where $Z$ is a normalizing constant.

The following three classes of probability distributions are
the ultra-discretization of three classes of probability distributions above with a suitable choice of parameters, which is proved in Theorem~\ref{thm:trop-dist}. In the following, $p, q \in \R$.

{\bf Mixed exponential distribution of beta type $\boldsymbol{(p, q)}$.} For $\lambda, a,b \in \R$, $-b < \frac{\lambda}{2} < a$, the \emph{mixed exponential distribution of beta type} with parameters $(\lambda,a,b ; p, q)$, which we denote $\mathrm{mExpB} (\lambda,a,b ; p, q)$, has density
\[\frac{1}{Z}{\rm e}^{-\lambda x} \bigl({\rm e}^{(a+\frac{\la}{2})(x+p)}\mathbf{1}_{x < -p}+\mathbf{1}_{x > -p}\bigr) \bigl(\mathbf{1}_{x < q}+{\rm e}^{-(b-\frac{\la}{2})(x-q)}\mathbf{1}_{x > q} \bigr) ,\]
where $Z$ is a normalizing constant.

{\bf Mixed exponential distribution of Kummer type $\boldsymbol{(p, q)}$.} For $\lambda,a,b \in \R$, $-b < \frac{\lambda}{2}$, the \emph{mixed exponential distribution of Kummer type} with parameters $(\lambda,a,b ; p, q)$, which we denote $\mathrm{mExpK} (\lambda,a,b ; p, q)$, has density
\[\frac{1}{Z}{\rm e}^{-\lambda x} \mathbf{1}_{x > -p-a} \bigl(\mathbf{1}_{x < q}+{\rm e}^{-(b-\frac{\la}{2})(x-q)}\mathbf{1}_{x > q} \bigr) ,\]
where $Z$ is a normalizing constant.

{\bf Mixed exponential distribution of GIG type $\boldsymbol{(p, q)}$.} For $\lambda,a,b \in \R$ with $a+b+p+q>0$, the \emph{mixed exponential distribution of GIG type} with parameters $(\lambda,a,b ; p, q)$, which we denote $\mathrm{mExpGIG} (\lambda,a,b ; p, q)$, has density
\[\frac{1}{Z}{\rm e}^{-\lambda x} \mathbf{1}_{x > -p-a} \mathbf{1}_{x <b+ q} ,\]
where $Z$ is a normalizing constant.

\begin{Remark}
 The mixed exponential distribution of GIG type is a shifted and truncated exponential distribution.
\end{Remark}

\begin{Theorem}\label{thm:trop-dist}
Fix $p,q \in \mathbb R$.
\begin{enumerate}
\item[$(i)$] Let $\mu_{\e}$ be the probability distribution $\mathrm{Be}'(\la \e, a\e ,b \e; S_{\e}(p), S_{\e}(q))$ for $\lambda, a,b \in \R$, ${-b\! <\! \frac{\lambda}{2}\! <\! a}$. Then $\mathrm{mExpB}(\la , a,b ; p, q)$ is the ultra-discretization of $(\mu_{\e})_{\e}$.

\item[$(ii)$] Let $\mu_{\e}$ be the probability distribution $\mathrm{K}(\la \e, S_{\e}(a) ,b \e; S_{\e}(p), S_{\e}(q))$ for $\lambda,a,b \in \R$, $-b < \frac{\lambda}{2}$. Then $\mathrm{mExpK}(\la , a,b ; p, q)$ is the ultra-discretization of $(\mu_{\e})_{\e}$.

\item[$(iii)$] Let $\mu_{\e}$ be the probability distribution $\mathrm{GIG}(\la \e, S_{\e}(a) ,S_{\e}(b); S_{\e}(p), S_{\e}(q))$ for $\lambda,a,b \in \R$. Assume $a+b+p+q>0$. Then $\mathrm{mExpGIG} (\lambda,a,b ; p, q)$ is the ultra-discretization of $(\mu_{\e})_{\e}$.
\end{enumerate}
\end{Theorem}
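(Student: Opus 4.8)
The plan is to prove each part by the same strategy used for Proposition~\ref{prop:gam}: compute the density of the pushforward $S_{\e}^{-1}(\mu_{\e})$ explicitly, show it converges pointwise (almost everywhere) to the claimed limiting density, and invoke Scheff\'e's lemma to upgrade to weak convergence. So the first step is, for each of the three families, to substitute $x \mapsto S_{\e}(x) = \exp(-\e^{-1}x)$ into the given density and apply the change-of-variables Jacobian $\frac{d}{dx}S_{\e}(x) = -\e^{-1}\exp(-\e^{-1}x)$, together with the stated substitutions for the scale parameters (e.g.\ $a \e$ or $S_{\e}(a)$ as appropriate, $S_{\e}(p)$, $S_{\e}(q)$). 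For instance in part~(i) the factor $x^{\lambda\e - 1}$ becomes $\exp(-\e^{-1}(\lambda\e - 1)x) = \e \cdot \exp(-\lambda x)\exp(\e^{-1}x)$ up to the Jacobian, and after combining with the Jacobian one gets a clean $\e^{-1}$-free leading factor $\exp(-\lambda x)$; similarly the binomial factors $(1 + S_{\e}(p)\cdot S_{\e}(x))^{-a\e - \lambda\e/2}$ become, after taking logs and using $\log(1 + e^{-t}) \to \max\{-t,0\}$-type asymptotics, the piecewise-exponential factors $e^{(a + \lambda/2)(x+p)}\mathbf{1}_{x<-p} + \mathbf{1}_{x>-p}$ appearing in $\mathrm{mExpB}$. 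The same mechanism — ``$(1 + p x)^{-a}$ ultra-discretizes to a bent exponential with kink at $x = -p$'' and ``$e^{-apx}$ ultra-discretizes to a hard cutoff $\mathbf{1}_{x > -p - a}$ after rescaling'' — drives parts~(ii) and~(iii) as well.

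Next I would treat the normalizing constants. The explicit densities come with $\e$-dependent normalizers $Z_{\e}$ (products of Gamma functions, Beta functions, or Bessel-$K$ functions depending on the case), and one must check that $Z_{\e}$, after absorbing the $\e^{-1}$ from the Jacobian and the $\e$ from the $x^{\lambda\e-1}$ factor, converges to a finite positive limit equal to the normalizer of the limiting density. For the Beta-prime case this uses $\Gamma(\lambda\e)\e \to 1/\lambda$ (as in Proposition~\ref{prop:gam}) and the continuity of the Beta function in its arguments near appropriate limits; for the Kummer case one similarly controls the confluent hypergeometric normalizer; for GIG one uses the known asymptotics of the modified Bessel function $K_{\nu}$ as its order and arguments scale. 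Rather than grinding through these, I would note that \emph{once} pointwise convergence of the \emph{unnormalized} densities to an integrable limit is established, Scheff\'e's lemma (applied to the normalized densities) automatically forces the normalizers to converge correctly, so it suffices to identify the pointwise limit of the unnormalized density and check it is integrable and not identically zero — which is exactly where the parameter constraints $-b < \lambda/2 < a$ (part i), $-b<\lambda/2$ (part ii), $a+b+p+q>0$ (part iii) enter: they guarantee the limiting density is a genuine probability density.

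The pointwise convergence of the unnormalized densities is the heart of the argument, and the main technical obstacle is handling the boundary points where the limiting density has jump discontinuities (the kinks at $x = -p$, the cutoffs at $x = -p-a$ and $x = b+q$). Away from these finitely many points the convergence is clean because $\exp(-\e^{-1}t) \to 0$ or $\to \infty$ with a definite sign of $t$; at the boundary points the prelimit density takes an intermediate value (like the $e^{-1}$ appearing at $x = -a$ in Proposition~\ref{prop:gam}), but since this is a Lebesgue-null set it does not affect almost-everywhere convergence, and Scheff\'e's lemma only needs a.e.\ convergence. A secondary point to be careful about is that for the GIG case the factor $e^{-b S_{\e}(q) x^{-1}}$ with $x = S_{\e}(x')$ becomes $\exp(-\e^{-1}(b + q - x'))$-type, producing the \emph{upper} cutoff $\mathbf{1}_{x' < b+q}$, and one must check the two cutoffs are compatible (i.e.\ $-p-a < b+q$, which is equivalent to the assumed $a+b+p+q>0$) so the support is nonempty. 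I expect the write-up to be three short computations sharing this common skeleton, with the bulk of the text devoted to part~(i) and the remaining parts dispatched by analogy.
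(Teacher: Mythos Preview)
Your overall architecture matches the paper's, and your identification of the limiting pieces (bent exponentials from the factors $(1+px)^{-c}$, hard cutoffs from the factors $e^{-apx}$) is correct, but there is a genuine gap in your treatment of the normalizing constant. You claim that ``once pointwise convergence of the unnormalized densities to an integrable limit is established, Scheff\'e's lemma \ldots\ automatically forces the normalizers to converge correctly.'' This is circular: Scheff\'e requires as input that the \emph{normalized} densities $g_\e/Z_\e$ converge a.e.\ to $g/Z$, which already presupposes $Z_\e \to Z$. Pointwise convergence $g_\e \to g$ of the unnormalized densities alone does not give $\int g_\e \to \int g$ --- Fatou yields only $\liminf Z_\e \ge Z$, and mass can escape to infinity (e.g.\ $g_\e = \mathbf{1}_{[0,1]} + \e^{-1}\mathbf{1}_{[\e^{-1},\e^{-1}+\e]}$ has $g_\e \to \mathbf{1}_{[0,1]}$ pointwise but $Z_\e \equiv 2$). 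So Scheff\'e cannot bootstrap the normalizer convergence for you.

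The paper closes this gap not via the explicit Beta/hypergeometric/Bessel asymptotics you mention as the alternative, but by dominated convergence: it exhibits an $\e$-uniform integrable majorant for the unnormalized pushforward density. For part~(i) the key elementary bound is
\[
\bigl(1+e^{-t/\e}\bigr)^{-\e c} \;\le\; \max\bigl\{1,2^{-\e c}\bigr\}\bigl(\mathbf{1}_{t\ge 0}+e^{tc}\mathbf{1}_{t<0}\bigr),
\]
applied with $c=a+\tfrac{\la}{2}>0$ and $c=b-\tfrac{\la}{2}>0$ to the two binomial factors, so that the integrand is dominated (for small $\e$) by a fixed constant times the limiting density itself; dominated convergence then gives $Z_\e \to Z$, after which the normalized densities converge a.e.\ and Scheff\'e (or simply $L^1$ convergence from DCT) finishes. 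Parts~(ii) and~(iii) reuse this bound together with the trivial majorant $\exp(-e^{-t/\e}) \le 1$ for the hard-cutoff factors. Your Scheff\'e shortcut should be replaced by this domination step.
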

\begin{proof}
(i) Let $\mu_{\e}=\mathrm{gBe}'(\la \e, a\e ,b \e; S_{\e}(p), S_{\e}(q))$. Then the density of $S_{\e}^{-1}(\mu_{\e})$ is
\[
\frac{1}{Z_{\e}}\exp(-\la x) \biggl(1 +\exp\biggl(-\frac{p+x}{\e}\biggr)\biggr)^{-\e(a+\frac{\la}{2})}\biggl(1 +\exp\biggl(-\frac{q-x}{\e}\biggr)\biggr)^{-\e(b-\frac{\la}{2})}
 \]
 where
 \[
 Z_{\e}=\int_{\R}\exp(-\la x) \biggl(1 +\exp\bigg(-\frac{p+x}{\e}\biggr)\biggr)^{-\e(a+\frac{\la}{2})}\biggl(1 +\exp\biggl(-\frac{q-x}{\e}\biggr)\biggr)^{-\e(b-\frac{\la}{2})} {\rm d}x.
 \]
 First, by direct computations, we have
 \[
 \lim_{\e \to 0}\biggl(1 +\exp\biggl(-\frac{p+x}{\e}\biggr)\biggr)^{-\e(a+\frac{\la}{2})}= \mathbf{1}_{x+p \ge 0} + {\rm e}^{ (p+x)(a+\frac{\la}{2})}\mathbf{1}_{x+p < 0}
 \]
 and
 \[
 \lim_{\e \to 0}\biggl(1 +\exp\biggl(-\frac{q-x}{\e}\biggr)\biggr)^{-\e(b-\frac{\la}{2})}= \mathbf{1}_{q-x \ge 0} + {\rm e}^{ (q-x)(b-\frac{\la}{2})}\mathbf{1}_{q-x < 0}.
 \]
 Hence, we only need to prove that $ \lim_{\e \to 0}Z_{\e}=Z$, where
 \[
 Z=\int_{\R} \exp(-\la x)\bigl( \mathbf{1}_{x+p \ge 0} + {\rm e}^{ (p+x)(a+\frac{\la}{2})}\mathbf{1}_{x+p < 0}
 \bigr)\bigl( \mathbf{1}_{q-x \ge 0} + {\rm e}^{ (q-x)(b-\frac{\la}{2})}\mathbf{1}_{q-x < 0}\bigr).
 \]
 Since
 \[
\biggl(1 +\exp\biggl(-\frac{p+x}{\e}\biggr)\biggr)^{-\e(a+\frac{\la}{2})} \le \max \bigl\{1, 2^{-\e(a+\frac{\la}{2})} \bigr\} \bigl( \mathbf{1}_{x+p \ge 0} + {\rm e}^{ (p+x)(a+\frac{\la}{2})}\mathbf{1}_{x+p < 0}
 \bigr),
 \]
 and
 \[
\biggl(1 +\exp\biggl(-\frac{q-x}{\e}\biggr)\biggr)^{-\e(b-\frac{\la}{2})} \le \max \bigl\{1, 2^{-\e(b-\frac{\la}{2})} \bigr\} \bigl( \mathbf{1}_{q-x \ge 0} + {\rm e}^{ (q-x)(b-\frac{\la}{2})}\mathbf{1}_{q-x < 0}
 \bigr),
 \]
 there exists $C>0$ such that for sufficiently small $\e>0$,
 \begin{align*}
 &\exp(-\la x) \biggl(1 +\exp\biggl(-\frac{p+x}{\e}\biggr)\biggr)^{-\e(a+\frac{\la}{2})}\biggl(1 +\exp\biggl(-\frac{q-x}{\e}\biggr)\biggr)^{-\e(b-\frac{\la}{2})} \\
 &\qquad{} \le C \exp(-\la x)\bigl( \mathbf{1}_{x+p \ge 0} + {\rm e}^{ (p+x)(a+\frac{\la}{2})}\mathbf{1}_{x+p < 0}
 \bigr)\bigl( \mathbf{1}_{q-x \ge 0} + {\rm e}^{ (q-x)(b-\frac{\la}{2})}\mathbf{1}_{q-x < 0}
 \bigr).
 \end{align*}
 Since the last expression is integrable over $\R$, by the Lebesgue convergence theorem, we conclude $\lim_{\e \to 0}Z_{\e}=Z$.

 (ii)
 Let $\mu_{\e}=\mathrm{K}(\la \e, S_{\e}(a) ,b \e; S_{\e}(p), S_{\e}(q))$. Then the density of $S_{\e}^{-1}(\mu_{\e})$ is
\[
\frac{1}{Z_{\e}}\exp(-\la x) \exp\biggl(-\exp\biggl(-\frac{a+p+x}{\e}\biggr)\biggr)\biggl(1 +\exp\biggl(-\frac{q-x}{\e}\biggr)\biggr)^{-\e(b-\frac{\la}{2})},
 \]
 where $Z_{\e}$ is the normalizing constant. By direct computations,
 \[
 \lim_{\e \to 0}\exp\biggl(-\exp\biggl(-\frac{a+p+x}{\e}\biggr)\biggr)= \mathbf{1}_{x+p+a > 0} + {\rm e}^{-1}\mathbf{1}_{x+p+a = 0}.
 \]
 Hence, by the same argument as (i), the result follows.

 (iii) The proof is completed in the same way as in (i) and (ii).
 \end{proof}

Based on the result of this theorem, from now on, we denote $\mathrm{gBe}'_{\star}=\mathrm{mExpB}$, $\mathrm{K}_{\star}=\mathrm{mExpK}$ and $\mathrm{GIG}_{\star}=\mathrm{mExpGIG}$. We note that $\mathrm{gBe}'(\la,a,b;1,1)$ is the beta prime distribution $\mathrm{Be}'$ and $\mathrm{mExpB}(\la,a,b;0,0)$ is the asymmetric Laplace distribution $\mathrm{AL}$, so as a special case we have the relation $\mathrm{Be}'_{\star}=\mathrm{AL}$.

\subsection{IP property for ultra-discretization of quadrirational maps}

In this subsection, we apply Theorem~\ref{thm:ip} to concrete rational functions and obtain the IP property for the ultra-discretization of rational functions.

First, as a very classical example, recall that for $F_{\mathrm{Ga}}(x,y):=\bigl(x+y, \frac{x}{y}\bigr)$, its ultra-discretization is explicitly computed as $F_{\mathrm{Ga},\star}(x,y)=(\min\{x,y\}, x-y) (=F_{\mathrm{Exp}}(x,y))$. Then, applying Theorem~\ref{thm:ip} and the fact that $\mathrm{Ga}_{\star}=\mathrm{sExp}$ and $\mathrm{Be}'_{\star}=\mathrm{AL}$, from the IP property of gamma distributions for $F_{\mathrm{Ga}}$, we obtain the IP property of shifted exponential distributions for $F_{\mathrm{Ga},\star}$ without any direct computation (see also~\cite[Proposition 5.7]{CSjsp}).

In the following, we prove the IP property for the ultra-discretization of quadrirational maps using the same approach.

Recall that
\smash{$H^{+,\a,\b}_{\mathrm{I},\star}$}, \smash{$H^{+,\a,\b}_{\mathrm{II},\star}$} and
\smash{$H^{A,\a,\b}_{\mathrm{III},\star}$} are obtained in Theorem~\ref{thm:ZT quad maps}. To obtain the IP property of these maps, we first recall the IP property for the quadrirational maps obtained in~\cite{SU}.

\begin{Theorem}[{\cite[Theorem 1.1]{SU}}] \label{thm:ip-quad-su}
Let $\a, \b >0$. If $X$ and $Y$ are $\R^+$-valued independent random variables with the following marginal distributions, then the $\R^+$-valued random variables $U$, $V$ given by $(U,V)=F(X,Y)$ for each map are independent and have the following marginal distributions:
\begin{enumerate}\itemsep=0pt
\item[$(i)$] For \smash{$F= H^{+,\a,\b}_{\mathrm{I}}$},
\begin{alignat*}{3}
&X \sim \mathrm{gBe}' (\la, a, b; \a, 1),\qquad && Y \sim \mathrm{gBe}' (-\la, a, b; \b, 1),& \\
&U \sim \mathrm{gBe}' (-\la, a, b; \a, 1),\qquad && V \sim \mathrm{gBe}' (\la, a, b; \b, 1),&
\end{alignat*}
where $\lambda \in \R$, $a,b >0$, $- \min\{a,b\} < \frac{\lambda}{2} < \min\{a,b\}$.

\item[$(ii)$] For \smash{$F= H^{+,\a,\b}_{\mathrm{II}}$},
\begin{alignat*}{3}
&X \sim \mathrm{K} (\la, a, b; \a, 1), \qquad && Y \sim \mathrm{K} (-\la, a, b; \b, 1),& \\
&U \sim \mathrm{K} (-\la, a, b; \a, 1), \qquad && V \sim \mathrm{K} (\la, a, b; \b, 1),&
\end{alignat*}
where $\lambda \in \R$, $a,b >0$, $-b < \frac{\lambda}{2} < b$.

\item[$(iii)$] For \smash{$F= H^{A,\a,\b}_{\mathrm{III}}$},
\begin{alignat*}{3}
&X \sim \mathrm{GIG} (\la, a, b; \a, 1), \qquad && Y \sim \mathrm{GIG} (-\la, a, b; \b, 1),& \\
&U \sim \mathrm{GIG}(-\la, a, b; \a, 1), \qquad && V \sim \mathrm{GIG} (\la, a, b; \b, 1),&
\end{alignat*}
where $\lambda \in \R$, $a,b >0$.
\end{enumerate}
\end{Theorem}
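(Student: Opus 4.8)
The plan is to establish the IP property for each of the three maps by a direct change of variables on densities. Fix $F^{\alpha,\beta}$ to be one of $H_{\mathrm{I}}^{+,\alpha,\beta}$, $H_{\mathrm{II}}^{+,\alpha,\beta}$, $H_{\mathrm{III}}^{A,\alpha,\beta}$ and set $(U,V)=F^{\alpha,\beta}(X,Y)$. By independence the joint density of $(X,Y)$ is the product of the two prescribed marginals; since $F^{\alpha,\beta}$ is a birational (hence almost everywhere smooth) bijection of $\mathbb{R}_+^2$, pushing this forward amounts to dividing by $|\det DF^{\alpha,\beta}(x,y)|$ and re-expressing $(x,y)$ through the inverse map. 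One then verifies that the resulting density of $(U,V)$ separates into a function of $u$ times a function of $v$, and that these factors are exactly the stated marginals.

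Concretely I would proceed as follows. First, write the inverse of $F^{\alpha,\beta}$ explicitly; reversibility, $F_{21}^{\beta,\alpha}\circ F_{12}^{\alpha,\beta}=\mathrm{id}$, essentially furnishes it. Second, compute $\det DF^{\alpha,\beta}(x,y)$ as a rational function of $x,y$ with parameters $\alpha,\beta$; for these quadrirational maps the Jacobian has a conveniently factored form, its numerator and denominator being products of the same polynomials that appear in $u$ and $v$. Third, substitute into the transformed density and simplify, sorting all factors into those depending only on $u$ and those depending only on $v$; the parameter constraints, for instance $-\min\{a,b\}<\frac{\lambda}{2}<\min\{a,b\}$ for $H_{\mathrm{I}}^+$ and $-b<\frac{\lambda}{2}<b$ for $H_{\mathrm{II}}^+$, are precisely what makes each factor integrable on $\mathbb{R}_+$, which both renders the statement meaningful and fixes the normalizing constants. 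A more conceptual alternative is to decompose each map into elementary birational blocks --- affine rescalings $\theta_\gamma$, the inversion $I$, and a ``core'' independence-type map --- and to track how each distribution family transforms under these blocks, each family being stable under them with explicit parameter changes (for example $x\mapsto x^{-1}$ sends a GIG to a GIG with $\lambda$ negated and $(a,b)$ swapped); the IP property then collapses to one classical Lukacs / Matsumoto--Yor-type independence identity.

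The step I expect to be the main obstacle is the third one: showing that the pushed-forward density genuinely factors. For $H_{\mathrm{III}}^A$ the map is close to monomial and the computation reduces quickly to the known independence property of the GIG family. For $H_{\mathrm{I}}^+$ and $H_{\mathrm{II}}^+$, by contrast, the components $u$ and $v$ involve honest bivariate polynomials such as $1+x+y+\beta xy$, so the Jacobian and the substituted density are rational expressions of appreciable size, and exhibiting the product structure demands a careful and slightly delicate factorization --- this is where one must either set up the elementary-block decomposition cleanly or simply push the algebra through. Once the separation is established, checking that each factor integrates to the claimed normalized density completes the argument.
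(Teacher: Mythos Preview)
Your proposal is a reasonable sketch of how the underlying independence identities can be proved, but there is nothing to compare it to in the present paper: Theorem~\ref{thm:ip-quad-su} is stated here as a quotation of \cite[Theorem~1.1]{SU} and is not re-proved. The authors only use it as an input to Theorem~\ref{thm:ip-quad}. So the paper's ``proof'' is simply the citation.

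That said, your plan is essentially the standard route for these IP results and is close in spirit to what is carried out in \cite{SU}. A few comments. Your first alternative --- compute the Jacobian of $F^{\alpha,\beta}$, push forward the product density, and check that it factors --- is correct in principle, and you have correctly identified the main difficulty: for $H_{\mathrm{I}}^+$ and $H_{\mathrm{II}}^+$ the algebra is substantial, and exhibiting the factorization is not automatic. Your second alternative --- decomposing each map into $\theta_\gamma$, $I$, and a core identity, then tracking parameters --- is in fact the cleaner way and is how \cite{SU} organizes the argument; for $H_{\mathrm{III}}^A$ it reduces to the Letac--Weso{\l}owski extension of the Matsumoto--Yor property, and the $H_{\mathrm{II}}^+$ and $H_{\mathrm{I}}^+$ cases build on analogous Kummer and beta-prime identities. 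One small caution: your appeal to reversibility to obtain the inverse is morally right, but note that $F_{21}^{\beta,\alpha}\circ F_{12}^{\alpha,\beta}=\mathrm{id}$ gives the inverse only after swapping the output components and the parameters, so be careful when writing the explicit inverse formulas.

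In short: there is no gap in your approach, but the paper itself offers no proof to compare against; it defers entirely to \cite{SU}, whose argument your second alternative most closely mirrors.
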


Then, as a consequence of Theorems~\ref{thm:ip},~\ref{thm:trop-dist} and~\ref{thm:ip-quad-su}, we have the following IP property for the ultra-discretization of quadrirational maps.

\begin{Theorem}\label{thm:ip-quad}
Let $\a,\b \in \R$. If $X$ and $Y$ are $\R$-valued independent random variables with the following marginal distributions, then the $\R$-valued random variables $U$, $V$ given by $(U,V)=F(X,Y)$ for each map are independent and have the following marginal distributions:
\begin{enumerate}\itemsep=0pt
\item[$(i)$] For \smash{$F= H^{+,\a,\b}_{\mathrm{I},\star}$},
\begin{alignat*}{3}
&X \sim \mathrm{gBe}'_{\star} (\la, a, b; \a, 0), \qquad && Y \sim \mathrm{gBe}'_{\star} (-\la, a, b; \b, 0), &\\
&U \sim \mathrm{gBe}'_{\star} (-\la, a, b; \a, 0), \qquad && V \sim \mathrm{gBe}'_{\star} (\la, a, b; \b, 0),&
\end{alignat*}
where $\lambda \in \R$, $a,b >0$, $- \min\{a,b\} < \frac{\lambda}{2} < \min\{a,b\}$.

\item[$(ii)$] For \smash{$F= H^{+,\a,\b}_{\mathrm{II},\star}$},
\begin{alignat*}{3}
&X \sim \mathrm{K}_{\star} (\la, a, b; \a, 0), \qquad && Y \sim \mathrm{K}_{\star} (-\la, a, b; \b, 0), &\\
&U \sim \mathrm{K}_{\star} (-\la, a, b; \a, 0), \qquad && V \sim \mathrm{K}_{\star} (\la, a, b; \b, 0),&
\end{alignat*}
where $a,\lambda \in \R$, $b >0$, $-b < \frac{\lambda}{2} < b$.

\item[$(iii)$] For \smash{$F= H^{A,\a,\b}_{\mathrm{III},\star}$},
\begin{alignat*}{3}
&X \sim \mathrm{GIG}_{\star} (\la, a, b; \a, 0), \qquad && Y \sim \mathrm{GIG}_{\star} (-\la, a, b; \b, 0),& \\
&U \sim \mathrm{GIG}_{\star}(-\la, a, b; \a, 0), \qquad && V \sim \mathrm{GIG}_{\star} (\la, a, b; \b, 0),&
\end{alignat*}
where $a,b,\lambda \in \R$, $a+b > \max\{-\a,-\b\}$.
\end{enumerate}
\end{Theorem}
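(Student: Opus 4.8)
The plan is to deduce all three assertions from Theorem~\ref{thm:ip} by feeding it a suitable $\e$-family of rational maps together with a suitable $\e$-family of input and output distributions; the three ingredients that do the work are Theorem~\ref{thm:ip-quad-su} (the IP property at the rational level), Theorem~\ref{thm:trop-dist} (which densities ultra-discretize to which), and Proposition~\ref{prop:zero-temp limit} (convergence of the maps). For case~$(i)$ I would take, for $\e>0$,
\[
F_{\e}:=H_{\mathrm{I}}^{+,S_{\e}(\a),S_{\e}(\b)}\colon\R_+^2\to\R_+^2,
\]
which is continuous (indeed smooth) and $\R_+^2$-valued since all numerators and denominators occurring in it are polynomials with positive coefficients evaluated at positive arguments, and
\[
\mu_{\e}:=\mathrm{gBe}'(\la\e,a\e,b\e;S_{\e}(\a),1),\qquad \nu_{\e}:=\mathrm{gBe}'(-\la\e,a\e,b\e;S_{\e}(\b),1),
\]
with $\tilde\mu_{\e}:=\mathrm{gBe}'(-\la\e,a\e,b\e;S_{\e}(\a),1)$ and $\tilde\nu_{\e}:=\mathrm{gBe}'(\la\e,a\e,b\e;S_{\e}(\b),1)$. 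Since $1=S_{\e}(0)$, these are precisely the objects appearing in Theorem~\ref{thm:ip-quad-su}$(i)$ with $(\la,a,b)$ replaced by $(\la\e,a\e,b\e)$ and the map parameters $\a,\b$ replaced by $S_{\e}(\a),S_{\e}(\b)$, so $F_{\e}(\mu_{\e}\times\nu_{\e})=\tilde\mu_{\e}\times\tilde\nu_{\e}$ holds for every $\e>0$. Cases~$(ii)$ and~$(iii)$ are handled identically, the only change being the $\e$-scaling pattern dictated by Theorem~\ref{thm:trop-dist}: for~$(ii)$ one scales $\la$ and $b$ by $\e$ but replaces the Kummer shape parameter $a$ by $S_{\e}(a)$ (and $\a$ by $S_{\e}(\a)$), and for~$(iii)$ one scales only $\la$ by $\e$ and replaces $a,b,\a$ by $S_{\e}(a),S_{\e}(b),S_{\e}(\a)$; the second shape parameter ``$q$'' is $1=S_{\e}(0)$ throughout.

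It then remains to apply Theorem~\ref{thm:ip}. Proposition~\ref{prop:zero-temp limit}, applied coordinatewise in the four variables $x,y,\a,\b$ exactly as in the derivation of Theorem~\ref{thm:ZT quad maps}, shows that $S_{\e}^{-1}\circ F_{\e}\circ S_{\e}$ converges to $H_{\mathrm{I},\star}^{+,\a,\b}$ uniformly on compact subsets of $\R^2$ (and likewise $H_{\mathrm{II},\star}^{+,\a,\b}$, $H_{\mathrm{III},\star}^{A,\a,\b}$ in the other two cases). Theorem~\ref{thm:trop-dist}$(i)$ identifies the ultra-discretization of $(\mu_{\e})_{\e}$ as $\mathrm{mExpB}(\la,a,b;\a,0)=\mathrm{gBe}'_{\star}(\la,a,b;\a,0)$, and the same for $\nu_{\e},\tilde\mu_{\e},\tilde\nu_{\e}$; parts $(ii)$ and $(iii)$ do the analogous job for the Kummer and GIG families, producing the $\mathrm{K}_{\star}$ and $\mathrm{GIG}_{\star}$ marginals in the statement. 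Theorem~\ref{thm:ip} applies verbatim and yields $H_{\mathrm{I},\star}^{+,\a,\b}(\mu\times\nu)=\tilde\mu\times\tilde\nu$ (and its two analogues), which is exactly the asserted IP property.

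The only genuinely non-routine part is the parameter bookkeeping: checking that the hypotheses of Theorems~\ref{thm:ip-quad-su} and~\ref{thm:trop-dist} hold for these $\e$-families, and recognizing how each constraint in the present statement arises. A parameter that is merely rescaled by $\e$ (namely $a,b$ in case~$(i)$ and $b$ in case~$(ii)$) must be positive for the corresponding $\e$-level distribution in Theorem~\ref{thm:ip-quad-su} to be defined, which is where the positivity requirements $a,b>0$ and $b>0$ come from; a parameter passed through $S_{\e}(\cdot)$ (namely $a$ in case~$(ii)$ and $a,b$ in case~$(iii)$) is positive at level $\e$ automatically, so it may be an arbitrary real in the present statement. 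The inequalities $-b<\frac{\la}{2}<a$, $-\min\{a,b\}<\frac{\la}{2}<\min\{a,b\}$ and $-b<\frac{\la}{2}<b$ are invariant under $(\la,a,b)\mapsto(\la\e,a\e,b\e)$, so they transfer to the $\e$-level unchanged; and in case~$(i)$ imposing both $-b<\frac{\la}{2}<a$ and $-a<\frac{\la}{2}<b$ (needed so that the $\la$- and $(-\la)$-marginals of $\mathrm{gBe}'_{\star}$ in Theorem~\ref{thm:trop-dist}$(i)$ are legitimate) amounts exactly to $-\min\{a,b\}<\frac{\la}{2}<\min\{a,b\}$. Finally, the hypothesis $a+b>\max\{-\a,-\b\}$ of case~$(iii)$ is precisely what the condition ``$a+b+p+q>0$'' of Theorem~\ref{thm:trop-dist}$(iii)$ becomes when one puts $p\in\{\a,\b\}$ and $q=0$ and requires it for both marginals. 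No analytic difficulty remains beyond this, since all convergence content is supplied by Theorem~\ref{thm:ip} and Proposition~\ref{prop:zero-temp limit}.
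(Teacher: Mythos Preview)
Your proof is correct and takes essentially the same approach as the paper: apply Theorem~\ref{thm:ip} with the $\e$-families supplied by Theorem~\ref{thm:ip-quad-su}, invoke Proposition~\ref{prop:zero-temp limit} for convergence of the maps, and Theorem~\ref{thm:trop-dist} for convergence of the distributions. The paper's own proof is in fact a single sentence listing exactly these four ingredients, so your version simply spells out the parameter bookkeeping that the paper leaves implicit.
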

\begin{proof}
The proof follows by applying Proposition~\ref{prop:zero-temp limit}, Theorems~\ref{thm:ip},~\ref{thm:trop-dist} and~\ref{thm:ip-quad-su} for each quadrirational map and quadruplets of probability distributions.
\end{proof}

Similar to the result of Yang--Baxter property for the ultra-discretization of quadrirational maps, it is not straightforward to obtain this theorem by direct calculations.
Therefore, applying the general theorem for the ultra-discretization of the IP property (namely, Theorem~\ref{thm:ip}) is a~reasonable approach.

\begin{Remark}
 As discussed in Remark~\ref{rem:bbs}, for $\a,\b \in \N \cup\{\infty\}$, it is known that the map~\smash{$H^{B,\a,\b}_{\mathrm{III},\star}$} induces the generalized box-ball system BBS$(\a,\b)$ \cite{TM97}. Hence, to understand the stationary distributions of BBS$(\a,\b)$, its IP property with respect to probability distributions supported on the discrete sets $\{0,1,2,\dots, \a \}$ for $X$ and $\{0,1,2,\dots, \b \}$ for $Y$ has been studied in detail in~\cite{CS}. Also, for general $\a,\b \in \R$,~\cite{CSirf} studied the IP property of \smash{$H^{B,\a,\b}_{\mathrm{III},\star}$} and obtained partial characterization results.
\end{Remark}

\begin{Remark}
Similar to \smash{$H^{B,\a,\b}_{\mathrm{III},\star}$}, for \smash{$H^{+,\a,\b}_{\mathrm{I},\star}$}, \smash{$H^{+,\a,\b}_{\mathrm{II},\star}$} and \smash{$H^{A,\a,\b}_{\mathrm{III},\star}$}, there should be {\it discrete} probability distributions which also satisfy the IP property. We have a guess that the discretized version of probability distributions $\mathrm{Be}'_{\star}$, $\mathrm{K}_{\star}$ and $\mathrm{GIG}_{\star}$ may satisfy the IP property, but we do not pursue it here.
\end{Remark}

\begin{Remark}
The IP property with the similar distribution as Theorem~\ref{thm:ip-quad} also holds for \smash{$G_{\mathrm{I},\star}$}, \smash{$G_{\mathrm{II},\star}$} and $H^{B}_{\mathrm{III},\star}$ since they are obtained by simple change of variables from \smash{$H^{+}_{\mathrm{I},\star}$}, \smash{$H^{+}_{\mathrm{II},\star}$} and~\smash{$H^{A}_{\mathrm{III},\star}$}.
\end{Remark}

\section{Discussion}\label{section4}

\subsection{Relation to integrable systems}

In this subsection, we review relations between the Yang--Baxter maps studied in this paper and some integrable lattice systems, whose dynamics are deterministic or random. Importantly, their (special class of) stationary distributions are often characterized by the IP property of the Yang--Baxter maps.

For deterministic models, the map \smash{$H^{B,\alpha,\beta}_{\mathrm{III}}$} with special parameters $\alpha=1$, $\beta=0$ is known to define the discrete KdV equation, and its ultra-discretization defines the box-ball system (cf.\ \cite{Inoue_2012,KNW, TS90,TH98}). More generally, as mentioned in Remark~\ref{rem:bbs}, \smash{$H^{B,\alpha,\beta}_{\mathrm{III}}$} is related to the modified discrete KdV equation and its ultra-discretization \smash{$H^{B,\a,\b}_{\mathrm{III},\star}$} defines the box-ball system with the box capacity $\a$ and the carrier capacity $\b$ for $\a,\b \in \N \cup \{\infty\}$. On the other hand, in~\cite{CSirf}, under a very general setting, the i.i.d.\ type stationary distributions associated to the deterministic lattice dynamics defined by a local map $F$ are shown to be characterized by the distributions having the IP property for this map $F$. Applying this general result, the i.i.d.\ type stationary distributions of the discrete KdV equation, the modified discrete KdV equation and the box-ball systems with/without capacity are partially characterized in~\cite{CS} and~\cite{CSirf} with the result on the IP property shown in~\cite{LW2}. Other than \smash{$H^{B,\a,\b}_{\mathrm{III},\star}$}, whether the ultra-discretization of quadrirational maps correspond to some (ultra-discrete) integrable systems is, at least to our knowledge, unknown.

For stochastic models, the maps $F_{\mathrm{Ga}}$ and \smash{$F_{\mathrm{Be}}(x,y):=\bigl(\frac{1-y}{1-xy},1-xy \bigr)$} are found to have a~close relation to positive temperature $(1+1)$-dimensional random polymers \cite{CN}. Actually, in~\cite{CN}, the authors characterize all $(1+1)$-dimensional stationary random polymers having a nice integrable property. For this, the IP property of $F_{\mathrm{Ga}}$ and $F_{\mathrm{Be}}$ and characterization of the distributions having the IP property play essential roles. Following the approach of~\cite{CN,CSjsp} studies the zero-temperature version of $(1+1)$-dimensional stationary random polymers, which are related to~$F_{\mathrm{Ga},\star}$ and~$F_{{\mathrm{Be}}',\star}$ where~\smash{$F_{{\mathrm{Be}}'}(x,y)=\bigl(\frac{1+x+y}{xy},\frac{1+y}{x} \bigr)$} is the subtraction-free version of $F_{\mathrm{Be}}$. Same for the positive temperature case, the distributions having the IP property for $F_{\mathrm{Ga},\star}$ and $F_{{\mathrm{Be}}',\star}$ induce the stationary distributions for random polymers, but the complete characterization of the zero-temperature version of $(1+1)$-dimensional stationary random polymers is still missing.

\subsection{Open problems}

In this subsection, we mention some open problems.

First, as the most fundamental question, the mathematical relation between the Yang--Baxter property and the IP property is completely open. Although several similarities between the two properties have been revealed in the previous study~\cite{SU} and in this paper, a direct mathematical connection between them remains to be established, and would be of great interest. In particular, the Yang--Baxter property considered here is a notion pertaining to a family of maps, whereas the IP property is defined for individual maps. Furthermore, the IP property is invariant under arbitrary changes of variables in each component, while the Yang--Baxter property does not share this invariance. These distinctions highlight the conceptual differences between the two, and so a precise formalization of their relationship is a challenging open question.

Next, as a problem related only to the Yang--Baxter property, can we classify all the ultra-discrete quadrirational maps having the Yang--Baxter property, which has been already done for the quadrirational maps \cite{PSTV}? Here, the ultra-discrete quadrirational map is a piecewise linear map from $\R^2$ to $\R^2$ whose inverse map is (well-defined and) also a piecewise linear map, and whose companion map as well as its inverse map are also (well-defined and) piecewise linear.

The \textit{characterization} of the distributions having the IP property for the ultra-discretization of quadrirational maps introduced in this paper is almost open. To the best of our knowledge, the only case that has been fully characterized is $F_{\mathrm{Exp}}=F_{\mathrm{Ga},\star}$, which was done in~\cite{Cr}. In~fact, in~general, for the ultra-discrete versions, there are not only continuous distributions with density functions, but also discrete distributions that have the IP property. In particular, for the three functions given in our main theorem, there should also be discrete distributions having the IP property, but this has not yet been done except for \smash{$H^{B,\a,\b}_{\mathrm{III},\star}$}, which was studied in~\cite{CSirf} and~\cite{CS}. Furthermore, simply finding examples of discrete distributions is not sufficient for a complete characterization, and a complete characterization without any condition on the property of the distribution is considered to be a rather difficult task. We note that for the original quadrirational maps $H_{\mathrm{I}}$, $H_{\mathrm{II}}$ and $H_{\mathrm{III}}$ in Theorem~\ref{thm:ip-quad-su}, the characterization problem has been completely resolved in the following order: the case of $H_{\mathrm{III}}$ in~\cite{LW2}, $H_{\mathrm{II}}$ in~\cite{KW2}, and $H_{\mathrm{I}}$ in~\cite{KLPW2025}.

Related to the last subsection, it is also interesting to see whether there is a deterministic/stochastic lattice dynamics induced by the map \smash{$H^{+}_{\mathrm{I}}$}, \smash{$H^{+}_{\mathrm{I},\star}$}, \smash{$H^{+}_{\mathrm{II}}$} and \smash{$H^{+}_{\mathrm{II},\star}$}.

Given that the quadrirational maps \smash{$H_{\mathrm{I}}$}, \smash{$H_{\mathrm{II}}$} and \smash{$H_{\mathrm{III}}$} are derived from a fully geometric background in~\cite{ABS, PSTV}, one would expect that a similar geometric characterization would be possible for their ultra-discrete versions. In particular, whether the Yang--Baxter property for~$H_{\mathrm{I},\star}$,~$H_{\mathrm{II},\star}$ and~$H_{\mathrm{III},\star}$ can be understood by tropical geometry is an important open question. Moreover, it would be exciting if the IP property can be also understood geometrically in a~certain sense.

Finally, a more practical problem would be the following. In this paper, we have taken the approach of considering an ultra-discrete version of what was known about the relationship between Yang--Baxter property and the IP property. As mentioned in~\cite{SU}, there could be an~approach to consider a higher dimensional version, especially a matrix version, instead of the ultra-discrete version. In fact, part of this work is in progress.

\subsection*{Acknowledgements}
The authors would like to thank the anonymous referees for their valuable comments and suggestions, which helped to improve the quality and clarity of this paper.


\pdfbookmark[1]{References}{ref}
\LastPageEnding

\end{document}